\documentclass[conference]{IEEEtran}

\usepackage{graphicx, color} 
\usepackage{epsfig} 
\usepackage{amsmath, bbm, bm} 
\usepackage{amsfonts, amssymb}
\usepackage{mathrsfs}
\usepackage{bbm}
\usepackage{tikz}
\usepackage{algorithm}
\usepackage{subcaption}
\usepackage{algpseudocode}
\usepackage{url}

\newcommand {\N} {{\rm I\kern-1.5pt N}}
\newcommand {\R} {{\rm I\kern-2.5pt R}}
\newcommand {\C} {{\rm I\kern-5pt C}}

\newtheorem{lemma}{Lemma}

\newtheorem{theorem}{Theorem}
\newtheorem{remark}{Remark}

\newcommand{\beqa}{\begin{eqnarray}}
\newcommand{\eeqa}{\end{eqnarray}}
\newcommand{\beqan}{\begin{eqnarray*}}
\newcommand{\eeqan}{\end{eqnarray*}}
\newcommand{\beq}{\begin{equation}}
\newcommand{\eeq}{\end{equation}}
\newcommand{\prf}{\noindent {\bf Proof}\ \ \ }

\newcommand{\bfl}{\begin{flushleft}}
\newcommand{\efl}{\end{flushleft}}

\newcommand{\myb}{\hspace{-0.1in}}
\newcommand{\myeq}{& \hspace{-0.1in} = & \hspace{-0.1in}}

\newcommand{\lb}{\nonumber \\}
\newcommand{\myarr}{\begin{array}{lll}}

\newcommand{\myleq}{& \myb \leq & \myb}
\newcommand{\myl}{& \myb < & \myb}
\newcommand{\bitem}{\begin{itemize}}
\newcommand{\eitem}{\end{itemize}}
\newcommand{\benum}{\begin{enumerate}}
\newcommand{\eenum}{\end{enumerate}}

\newcommand{\myhb}{\hspace{-0.3in}}

\newcommand{\myskip}{\\ \vspace{-0.1in}}


\def\QED{~\rule[-1pt]{5pt}{5pt}\par\medskip}
\newenvironment{proof}{{\bf Proof: \ }}{ \hfill \QED}

\usepackage{framed}
\usepackage[framemethod=TikZ]{mdframed}
\mdfdefinestyle{theoremstyle}{backgroundcolor=black!7,roundcorner=5pt,outerlinewidth=1.1}

\begin{document}

\title{Miniature Robot Path Planning for 
	Bridge Inspection:
	Min-Max Cycle Cover-Based Approach}

\author{\IEEEauthorblockN{Michael Lin and
Richard J. La\thanks{The authors are with the Department
of Electrical and Computer Engineering and the 
Institute for Systems Research at the University of
Maryland, College Park.}
\thanks{This work was supported in part by an NSF CPS
grant}}
\IEEEauthorblockA{Department
of Electrical and Computer Engineering \\
and the 
Institute for Systems Research \\
University of
Maryland, College Park \\
\{mlin1025, hyongla\}@umd.edu}
}

\maketitle

\begin{abstract}

We study the problem of planning the deployments 
of a group of mobile robots. While the problem 
and formulation can be used for many different 
problems, here we use a bridge inspection as the 
motivating application for the purpose of 
exposition. The robots are initially stationed 
at a set of depots placed throughout the bridge. 
Each robot is then assigned a set of sites on the 
bridge to inspect and, upon completion, must
return to the same depot where it is stored. 

The problem of robot planning 
is formulated as a rooted min-max 
cycle cover problem, in which the vertex set 
consists of the sites to be inspected and robot 
depots, and the weight of an edge captures either 
(i) the amount of time needed to travel from one 
end vertex to the other vertex or (ii) the 
necessary energy expenditure for the travel. In the
first case, the objective function is the total 
inspection time, whereas in the latter case, it 
is the maximum energy expenditure among all 
deployed robots. We propose a novel
algorithm with approximation ratio of $5
+ \epsilon$, where $0<\epsilon<1$. In addition, 
the computational complexity of the proposed 
algorithm is shown to be $O\big( n^2+2^{m-1} n 
\log(n+k) \big)$, where $n$ is the number of 
vertices, and $m$ is the number of depots.  
\end{abstract}


\section{Introduction}
	\label{sec:Introduction}

With aging infrastructure, ensuring the safety 
of existing civil structures, such as 
bridges, roads and tunnels, is becoming an 
important societal challenge. Inadequate 
monitoring of infrastructure can result 
in major incidents, such as the collapse of
bridges, e.g., the failure of Ponte Morandi
bridge in Italy in August 2018, which killed 43 
people. According to a 2018 U.S. Federal Highway 
Administration (FHWA) report, 
more than 47,000 bridges 
are deemed to be in ``poor" condition out of 
approximately 616,000
bridges, and nearly a half of all
bridges are found to be in ``fair" condition
\cite{FHWA}. 

Unfortunately, many segments of a bridge are not 
easily accessible, making it difficult for human 
inspectors to perform frequent inspections. 
As a result, many bridges are not inspected 
frequently enough to maintain their structural 
health and safety, which is reflected 
in the U.S. FHWA report, thereby raising
the possibility of suffering another
major bridge collapse in the future.

Rapid advances in robotics technologies make it 
possible to employ small mobile robots to help 
with the inspection of different types of 
structures, including bridges. These robots
will likely be battery powered to improve their
mobility, thereby limiting their ranges and 
tasks that they can perform before their
battery needs to be recharged. For this reason, 
in order to complete a bridge inspection as 
quickly as possible, it is important to take 
into account their energy constraints when 
employing the robots for inspection. 

\begin{figure}[h]
\centerline{
\includegraphics[width=3.5in]{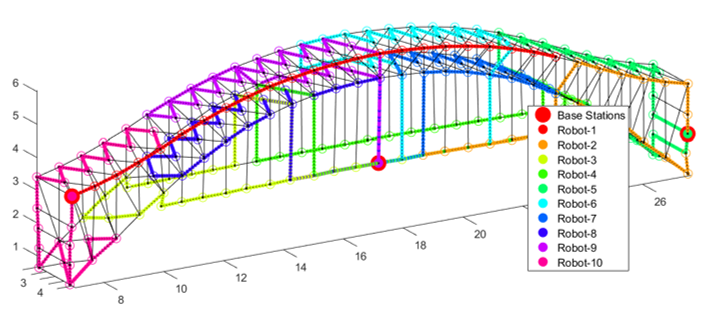}
}
\caption{Bridge inspection by mobile robots.}
\label{fig:bridge}
\end{figure}

We study a robot planning problem in which a 
group of battery-powered mobile robots are 
stored and recharged at a set of depots, and 
are utilized for inspecting a set of sites.
These sites could, for instance, 
represent various points on a bridge that need 
to be inspected (e.g., joints) by a robot.
This is illustrated in Fig.~\ref{fig:bridge}.  In 
the figure, vertices represent the set of 
sites on a bridge to be inspected by the robots, 
and edges show paths robots
can take to move between various points
on the bridge.

In our problem, we are interested in determining, 
for each robot, (i) a depot where it is to be 
stored (and recharges its battery) and 
(ii) a set of sites for the robot to inspect. 
We require that, upon completing the inspection 
of all assigned sites, the robot must return to 
the same depot where it is stored.

The problem is formulated as a rooted min-max
cycle cover problem. Each cycle in the cycle
cover, which is rooted at a depot, 
is assigned to a robot and determines a 
subset of sites that the robot must inspect
as well as the depot at which the robot is
to be stationed. We propose a new algorithm
for the rooted min-max cycle cover problem with 
approximation ratio of $5 + \epsilon$
($0 < \epsilon < 1$). 

The rest of the paper is organized as follows.
Section~\ref{sec:Related} summarizes some
of the most closely related studies in the
literature. Section \ref{sec:Model} 
presents the model and
formulation of our problem. Our proposed
algorithm is described in Sections 
\ref{sec:Preliminaries} and 
\ref{sec:Complete}. We discuss the complexity
of the proposed algorithm in Section
\ref{sec:Complexity}.

\section{Related Work}
	\label{sec:Related}

The well-known traveling salesman problem (TSP) is 
NP-hard. Since the TSP is a special case with one 
depot and one agent, the general rooted min-max
cycle cover problem is also NP-hard. 
For this reason, researchers
proposed approximation algorithms for related
problems over the years. 

Even et al. \cite{even2004min} studied the 
min-max cycle cover problem in the context 
of nurse station location problem. 
They proposed algorithms for both rooted
and unrooted (or rootless) min-max tree 
cover problems 
with approximation ratio of $4+\epsilon$
($\epsilon > 0$). 
This provides an $8+\epsilon$ 
approximation algorithm for the rooted 
min-max cycle cover problem.

In a closely related study, Arkin et al. 
\cite{arkin2006approximations} 
also provided a $4+\epsilon$ approximation 
algorithm for the unrooted min-max tree 
cover problem.  
Khani and Salavatipour \cite{khani2014improved} 
improved the approximation ratio to 
$3+\epsilon$ for unrooted min-max tree
cover problem, which in turn 
yields an approximation ratio of $6+\epsilon$ 
for the unrooted min-max cycle cover problem. 

Rather than starting with a tree cover problem, 
Jorati \cite{jorati2013approximation} directly
studied both rooted and unrooted min-max
cycle cover problems and proposed
algorithms with approximation ratio of 
$5\frac{1}{3} + \epsilon$ and 
$7+\epsilon$ for unrooted 
and rooted problem, respectively. Similarly, 
in an independent study,  
Xu et al. \cite{xu2015approximation} 
investigated the same cycle cover problems and 
proposed algorithms with 
approximation ratio of $5\frac{1}{3}+\epsilon$
and $6\frac{1}{3}+\epsilon$
for unrooted and (uncapacitated) rooted
min-max cycle cover
problems, respectively.\footnote{The authors of 
\cite{xu2015approximation} also
studied the capacitated rooted
min-max cycle cover problem and proposed
an algorithm with approximation
ratio of $7+\epsilon$.}
In addition, Yu and Liu 
\cite{yu2016improved} proposed 
algorithms with improved approximation 
ratio of $5+\epsilon$ and $6+\epsilon$ for
unrooted and rooted min-max cycle cover
problems, respectively, 
by utilizing the well-known 
Christofides algorithm \cite{christofides1976worst} 
for the TSP problem.
 
We point out that there are other studies on special 
cases of the cycle cover problems with 
better approximation ratios, e.g.,
\cite{frederickson1976approximation},  
\cite{farbstein2015min}, 
\cite{nagamochi2005approximating}, 
\cite{nagamochi2007approximating}, 
\cite{karakawa2009minmax}. 
For example, Frederickson et al. 
\cite{frederickson1976approximation} proposed a 
$2\frac{1}{2}+\epsilon$ approximation algorithm 
for single depot case.
Finally, Xu and 
Wen \cite{xu2010approximation}
proved that, unless $P = NP$, 
there exists no polynomial time 
$(1\frac{3}{17} - \epsilon)$-approximation algorithm
for the min-max cycle cover problem with a single
root. This result is generalized by Xu et 
al. \cite{xu2012approximation} who showed that there
does not exist a polynomial time algorithm for 
the unrooted and rooted min-max 
cycle cover problems with an approximation ratio 
less than $1\frac{1}{3}$ unless $P=NP$.

It is worth mentioning that, although our rooted 
min-max cycle cover formulation is inspired by 
\cite{xu2015approximation} and \cite{yu2016improved}, 
our model is slightly different: they only 
required the union of cycles to cover all vertices, except 
for depots. In other words, the depots need not be
covered by the cycle cover. In our problem, however, 
we require that all vertices, including depots, be 
covered by the cycle cover.

Instead of investigating the unrooted 
min-max cycle cover 
problem and applying the algorithm on the rooted 
version as in many, if not most, of related
studies, we directly 
tackle the rooted min-max cycle cover problem. 
We propose an approximation algorithm that runs 
in polynomial time with a fixed number of depots. 
The approximation ratio of $5+\epsilon$ is 
comparable to the state-of-the-art algorithm 
for unrooted problem \cite{yu2016improved}. 
Moreover, even though our 
formulation is somewhat different, our approximation 
ratio is better than the previous best algorithm for 
rooted cycle cover problem \cite{yu2016improved}.

\section{Model and Formulation}
	\label{sec:Model}

We formulate the problem of robot planning 
as a rooted min-max cycle cover problem on a 
complete undirected graph $G = (V, E)$: the 
vertex set $V$ consists of both (a) 
the sites to be inspected by the robots 
and (b) the depots where
the robots are stored, 
and each undirected edge $e$ in $E$ has a 
weight associated with it. The goal of 
the problem is to find a set of cycles subject 
to following two constraints:
(i) each cycle is rooted
at a depot in that it starts 
and ends at the same depot, and 
(ii) the union of 
all cycles covers all vertices in $V$. 
The interpretation is that each cycle found
in the problem is assigned to a unique robot
and determines the set of sites to be visited
by the robot as well as the depot at which
the robot will be stored. 

The weights assigned to the edges depend on our
objective. We consider two different
choices of weights. Obviously, it is possible to 
take a weighted sum of the two choices. 

\begin{itemize}

    \item {\bf Case~I:} The weight of an edge 
	models the minimum amount of time needed to 
	travel from one end vertex to the other 
	vertex. In this case, the overall cost 
	for a robot is equal to the total travel 
	time for the robot to visit all assigned 
	sites in the cycle and come back to the 
	depot. The objective of 
	our problem is to {\em minimize the total 
	amount of time needed for inspection}, 
	which is equivalent to minimizing the 
	maximum cost among all 
	robots.\footnote{Here, we implicitly
	assume that the amount of time it takes
	to inspect the sites is negligible 
	compared to the travel time. However, 
	the case with non-zero weights associated
	with vertices can be handled by constructing
	a new graph with only edge weights as 
	described in \cite{xu2012approximation}.}

    \item {\bf Case~II:} In the second case,
	the edge weight indicates the necessary 
	energy expenditure for the travel from 
	one end vertex to the other. The goal is 
	then to {\em minimize the maximum energy 
	expenditure among all robots} for the given 
	assignments. 

\end{itemize}

The second formulation allows us to determine 
whether or not the battery-powered robots can 
perform the inspection without having to 
recharge; if the optimal value 
of the optimization problem is larger than 
the amount of battery energy available to the 
robots, it suggests that recharging is necessary
for some robots.

These edge weights can be obtained from the
weights we can estimate from the physical 
structure. For example, Fig.~\ref{fig:bridge}
tells us the available paths between various
points on the bridge. Once we estimate the
weights of these available paths on the bridge, 
we can use a well-known algorithm, such as 
Dijkstra's algorithm~\cite{Algorithm}, 
to compute the weights
of the shortest paths between any pair of 
vertices in $V$.

\subsection{Rooted min-max cycle cover problem}

Suppose (i) $G = (V, E)$ is a complete 
undirected graph described earlier, 
(ii) $w: E \to \R_+ := [0, \infty)$ 
is an edge weight 
function, (iii) $D \subseteq V$ is a 
set of depots (where the robots are stored), 
and (iv) $k$ is a positive integer (which is
equal to the number of robots in our problem).
Given a subgraph $G'$ 
of $G$, e.g., a cycle or a tree, let 
${\cal E}(G')$ and 
${\cal V}(G')$ be the 
set of edges and the set of vertices, 
respectively, in $G'$.

Denote by ${\cal C}^k$ the set of all possible 
edge-disjoint, rooted cycle covers of $V$ 
with at most $k$ cycles. 
In other words, an element $\mathscr{C} 
= \{ C_1, \ldots, C_q\}$ of ${\cal C}^k$ 
consists of $q$ cycles in $G$ satisfying 
the following:
\begin{enumerate}
\item[c1.] $q \leq k$; 
    
\item[c2.] ${\cal E}(C_i) \cap {\cal E}(C_j) = 
    \emptyset$ for all $i \neq j$; 
    
\item[c3.] for every cycle $C_j \in \mathscr{C}$, 
    $|{\cal V}(C_j) \cap D| = 1$; and 
    
\item[c4.] $\cup_{j=1}^q {\cal V}(C_j) = V$.
\end{enumerate}
Given a cycle cover
$\mathscr{C}$ in ${\cal C}^k$, we denote 
the number of cycles in $\mathscr{C}$ by 
$\xi(\mathscr{C})$. 
 
The rooted min-max cycle 
cover problem we adopt is the following 
optimization problem:\footnote{In some cases, 
it may make sense to limit the number of
robots that can be stored at each robot. 
This would give rise to a {\em capacitated}
rooted min-max cover problem. Here, we 
assume no such constraints on depot sizes.}  
\myskip

\begin{mdframed}[style=theoremstyle]
\noindent
\underline
{\bf Rooted Min-Max Cycle Cover Problem}
\beqa
{\rm minimize}_{\mathscr{C} \in {\cal C}^k}
& & \max_{j = 1,\ldots,\xi(\mathscr{C})} 
	\Big(
	\sum_{e \in {\cal E}(C_j)} w(e)
	\Big) 
	\label{eq:objective} 
\eeqa
\end{mdframed}
where, with a little abuse of notation, 
$C_j$ denotes the $j$-th cycle
in $\mathscr{C}$. 
Therefore, the goal of the rooted min-max 
cycle cover problem is to find $q$ 
edge-disjoint cycles $\mathscr{C} 
= \{C_1, \ldots, C_{q}\}$
such that (a) $q \leq k$, 
(b) each cycle contains exactly one depot in $D$, 
(c) the union of these cycles includes all
vertices in $V$, and 
(d) the maximum weight of cycles is minimized.
As mentioned earlier, a feasible 
solution we obtain from 
the optimization problem in \eqref{eq:objective}
also determines the (minimum) number of robots 
that will be stored at each depot. 

Throughout the paper, 
we assume that the edge weight function $w$ 
is a metric and, hence, satisfies symmetry 
and triangle inequality: for all distinct 
vertices $v_1, v_2, v_3 \in V$, we have
\beqan
w(v_1,v_2) 
\leq w(v_1,v_3) + w(v_3,v_2).
\eeqan
This is a natural assumption for both choices
of edge weights discussed earlier. 

\begin{remark}
Recall that the rooted min-max cycle
cover problem in \eqref{eq:objective} 
permits only edge-disjoint cycles. 
Our problem of robot planning does not 
explicitly require that the cycles be 
edge-disjoint. However, using the assumption 
that the edge weight function $w$ is a 
metric, one can easily show that the
following holds: given a feasible 
non-edge disjoint cycle cover $\mathscr{C}$, 
we can construct an edge-disjoint 
cycle cover $\mathscr{C}^{{\rm ed}}$ such 
that the maximum cycle weight of 
$\mathscr{C}^{{\rm ed}}$ is less than or 
equal to that of $\mathscr{C}$. For this
reason, without loss of generality, we
can focus on edge-disjoint cycle
covers in the rooted min-max cycle cover
problem in \eqref{eq:objective}. 
\end{remark}

\begin{remark}
It is clear from the constraints that, if 
$k < |D| =: m$, 
then there is no feasible solution; 
since each cycle can include only one depot
(constraint c3), 
at most $k$ depots can be contained in the 
union $\cup_{j=1}^k C_j$ and, as a 
result, constraint c4
cannot be satisfied. Although this issue can
be dealt with by changing the formulation 
slightly, we 
assume that there are more robots than the
depots, i.e., $k \geq m$. This is a reasonable
assumption for our problem as the number
of depots is expected to be small with each
depot housing many robots.
\end{remark}

\begin{remark}
From the problem formulation in 
\eqref{eq:objective}, the cycle cover
we are looking for must cover not only 
all the sites of interest, but also all 
depots in $D$. In other words, we require
that at least one robot be stationed at each
depot. Although we make this assumption 
explicit, this is likely to be satisfied in practice 
even without making it explicit; depots should be
spread out across the bridge for storage and 
recharging, and sites close to each depot
should be assigned to robot(s) stationed
at the depot. Otherwise, the depot should 
be removed. 
\end{remark}


\section{Preliminaries: Key Steps of the 
	Proposed Algorithm}
	\label{sec:Preliminaries}

The proposed algorithm (Algorithm
\ref{alg:complete} in Section
\ref{sec:Complete}) consists of several
steps that we discuss in detail 
in this section. 
The input to the algorithm comprises the 
information for the rooted min-max cycle
cover problem: 
(a) a complete undirected graph $G=(V, E)$, 
(b) a metric edge weight function $w$, 
(c) the number of cycles $k$, and 
(d) the depot set $D$.

\subsection{Step 1: Construction
	of a rooted spanning forest $F^*$}
	\label{stepA}

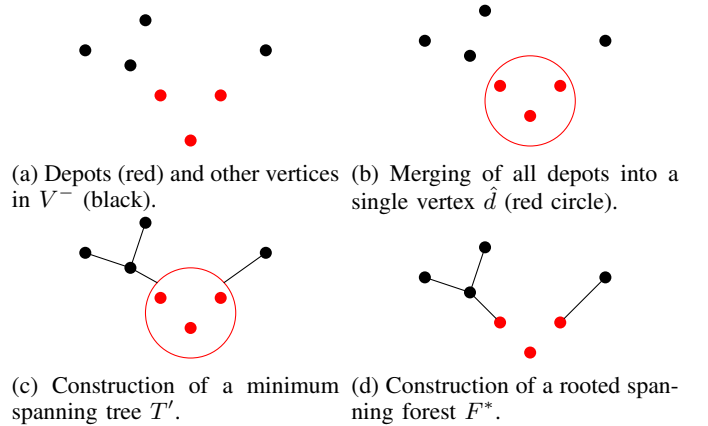
\begin{figure}[h]
    \centering
    \begin{subfigure}[t]{0.24\textwidth}
        \centering
        \begin{tikzpicture}[scale=0.4, every node/.style={transform shape}]
            \fill[color=red] (0,0.5) node (v1) {} circle (.2);
            \fill[color=red] (1,2) node (v1) {} circle (.2);
            \fill[color=red] (-1,2) node (v1) {} circle (.2);
            \fill[color=black] (2.5,3.5) node (v1) {} circle (.2);
            \fill[color=black] (-2,3) node (v1) {} circle (.2);
            \fill[color=black] (-1.5,4.5) node (v1) {} circle (.2);
            \fill[color=black] (-3.5,3.5) node (v1) {} circle (.2);
        \end{tikzpicture}
        \caption{Depots (red) and other vertices in $V^{-}$ (black).}
    \end{subfigure}%
    ~ 
    \begin{subfigure}[t]{0.24\textwidth}
        \centering
        \begin{tikzpicture}[scale=0.4, every node/.style={transform shape}]
            \draw[color=red] (0,1.5) circle (1.5);
            \fill[color=red] (0,1) node (v1) {} circle (.2);
            \fill[color=red] (1,2) node (v1) {} circle (.2);
            \fill[color=red] (-1,2) node (v1) {} circle (.2);
            \fill[color=black] (2.5,3.5) node (v1) {} circle (.2);
            \fill[color=black] (-2,3) node (v1) {} circle (.2);
            \fill[color=black] (-1.5,4.5) node (v1) {} circle (.2);
            \fill[color=black] (-3.5,3.5) node (v1) {} circle (.2);
        \end{tikzpicture}
        \caption{Merging of all depots into 
        	a single vertex $\hat{d}$ (red 
        	circle).}
    \end{subfigure}
        \begin{subfigure}[t]{0.24\textwidth}
        \centering
        \begin{tikzpicture}[scale=0.4, every node/.style={transform shape}]
            \draw[color=red] (0,1.5) circle (1.5);
            \fill[color=red] (0,1) circle (.2);
            \fill[color=red] (1,2) circle (.2);
            \fill[color=red] (-1,2) circle (.2);
            \fill[color=black] (2.5,3.5) node (v4) {} circle (.2);
            \fill[color=black] (-2,3) node (v2) {} circle (.2);
            \fill[color=black] (-1.5,4.5) node (v1) {} circle (.2);
            \fill[color=black] (-3.5,3.5) node (v3) {} circle (.2);
            \draw (v1) -- (v2);
            \draw (v3) -- (v2);
            \draw (v4) -- (1.1,2.5);
            \draw (v2) -- (-1.1,2.5);
        \end{tikzpicture}
        \caption{Construction of a minimum 
        	spanning tree $T'$.}
    \end{subfigure}%
    ~ 
    \begin{subfigure}[t]{0.24\textwidth}
        \centering
        \begin{tikzpicture}[scale=0.4, every node/.style={transform shape}]
            \fill[color=red] (0,1) circle (.2);
            \fill[color=red] (1,2) node (v5) {} circle (.2);
            \fill[color=red] (-1,2) node (v6) {} circle (.2);
            \fill[color=black] (2.5,3.5) node (v4) {} circle (.2);
            \fill[color=black] (-2,3) node (v2) {} circle (.2);
            \fill[color=black] (-1.5,4.5) node (v1) {} circle (.2);
            \fill[color=black] (-3.5,3.5) node (v3) {} circle (.2);
            \draw (v1) -- (v2);
            \draw (v3) -- (v2);
            \draw (v5) -- (v4);
            \draw (v2) -- (v6);
        \end{tikzpicture}
        \caption{Construction of a rooted 
        	spanning forest $F^*$.}
    \end{subfigure}
    \caption{Illustration of Step 1 
    for constructing a rooted spanning
    forest.}
    \label{fig:mforest}
\end{figure}

The first step generates a rooted forest with 
$m$ trees and consists of the following two 
steps:\footnote{The forests we construct are 
in fact tree covers for $G$. For consistency, 
we shall refer to them as forests in the 
remainder of the paper.}

{\bf Step 1-i:} 
First, we collapse all depots in $D$ into a 
single node, which we denote by $\hat{d}$. 
This is shown in Fig.~\ref{fig:mforest}(b).
Second, 
we create an edge from $\hat{d}$ to every 
vertex $v$ in $V \setminus D =: V^-$ 
with an edge weight
\beqan
w(\hat{d},v) \myeq \min_{d \in D}w(d,v).
\eeqan
We denote the complete undirected graph with 
vertex set $V' := V^- \cup 
\{\hat{d}\}$ by $G' = (V', E')$. 
Third, we compute a minimum spanning tree 
$T'$ of $G'$, using Prim's algorithm
\cite{Algorithm}, as shown 
in Fig.~\ref{fig:mforest}(c). 

{\bf Step 1-ii:} 
We uncouple the node $\hat{d}$ back into the 
original depots to create a forest $F^*$
as follows: first, let $F^* = T'$. Second, 
add the depots in $D$ and replace each 
edge $(v, \hat{d})$, $v \in V^-$, with 
an edge ($d^\star, v)$, 
where $d^\star \in \arg\min_{d \in D}
w(d, v)$. When $| \arg\min_{d \in D} w(d, v)| 
> 1$, we arbitrarily choose a depot in 
the set. Finally, we remove $\hat{d}$. 
The resulting rooted forest $F^*$ has exactly 
$m$ trees in it, which are denoted by
$T^*_1, \ldots, T^*_m$. This is illustrated
in Fig.~\ref{fig:mforest}(d).

\subsection{Step 2: Virtual 
	Forest ($F$) and an Exhaustive Search}
\label{stepB}

The second step takes $F^*$ generated in
Step 1 as the input
and produces a set of forests with trees
that are created by merging
the trees in $F^*$ in different ways. 
In particular, it considers $2^{m-1}$
different possible ways in which 
the $m$ trees in $F^*$ can be connected
with each other. The reason for this is
to ensure that we consider at least one 
forest that will lead to a rooted cycle cover 
with a provable upper bound on the maximum 
cycle weight (Steps 3 and 4).
Specifically, we want to make sure that 
the forest described below is considered
in Steps 3 and 4 through an exhaustive
search. 

{\bf Step 2-i:} 
Consider the construction of a spanning 
tree using Algorithm~\ref{algo:ST} below, 
starting with the forest $F^*$.

\begin{algorithm}[h]
\caption{Construction of Edge Set 
	$E^\dagger$}
\label{algo:ST}
\begin{algorithmic}[1]
\State {Let $F_{{\rm tmp}} = F^*$ and $E^\dagger 
= \emptyset$}
\While{there is more than one tree in $F_{{\rm tmp}}$}
	\State {Find an edge $e^c \in E$ that joins two 
	distinct trees $T_1$ and $T_2$ in 
	$F_{{\rm tmp}}$ with the smallest weight}
	\State {Add $e^c$ to $E^\dagger$}
	\State {Remove $T_1$ and $T_2$ from $F_{{\rm tmp}}$,
	and add the new tree formed after connecting
	$T_1$ and $T_2$ with $e^c$ to $F_{{\rm tmp}}$}
\EndWhile
\end{algorithmic}
\end{algorithm}
Note that the edge set $E^{\dagger}$ produced
by Algorithm~\ref{algo:ST} contains $m-1$
edges that connect the $m$ trees in $F^*$
into a single spanning tree. 

\paragraph{Forest $F_{{\rm opt}}$}
Suppose that $\mathscr{C}^* = \{C_1^*, \ldots, 
C_{q^*}^*\}$, where $q^* \leq k$,  is a solution to 
\eqref{eq:objective}, i.e., an optimal cycle cover,   
and the maximum weight of cycles in 
$\mathscr{C}^*$ is $\lambda^*$.
Note that $\lambda^*$ is the optimal value
of \eqref{eq:objective}. 
Throughout the remainder of the paper, with 
a little abuse of notation, we denote the
total weight of edges in a subgraph 
$\tilde{G}$, such as trees and cycles, 
by $w(\tilde{G})$.   

Assume for now that the optimal value
$\lambda^*$ and the optimal cycle cover
$\mathscr{C}^*$ are known. 
We classify a tree $T$ into three categories 
on the basis of the optimal value $\lambda^*$:
\begin{enumerate}
\item[T1.] Heavy tree : $w(T) \geq \lambda^*$.

\item[T2.] Light tree : $w(T) < \lambda^*$ 
and there exists a cycle $C \in \mathscr{C}^*$
such that ${\cal V}(C) \subseteq {\cal V}(T)$, i.e., 
$T$ contains all the vertices in at least one
cycle in the optimal cycle cover $\mathscr{C}^*$. 

\item[T3.] Bad tree : $w(T) < \lambda^*$ 
and there is no cycle $C \in \mathscr{C}^*$
such that ${\cal V}(C) \subseteq {\cal V}(T)$.
\end{enumerate}

Based on this classification of trees,
starting with the forest $F^*$, we perform
a procedure in Algorithm~\ref{algo:ElimBad} 
to eliminate bad trees and construct a new 
forest free of bad trees.

\begin{algorithm}[h]
\caption{Elimination of Bad Trees}
\label{algo:ElimBad}
\begin{algorithmic}[1]
\State Let $F_{{\rm tmp}} = F^*$ 
\While{there is at least one bad tree in 
	$F_{{\rm tmp}}$}
	\State Choose a bad tree $T_b$ in 
		$F_{{\rm tmp}}$ 
	\State Pick an edge $e_{{\rm new}}
		= (v_{{\rm new}}, v_{{\rm new}}')$ 
		from $E^{\dagger}(T_b)$ 
		with the smallest weight, where 
		$E^{\dagger}(T_b) 
		:= \{(v, v') \in E^\dagger
			\ | \ v \in {\cal V}(T_b), 
			v' \notin {\cal V}(T_b) \}$ 
		
	\State Connect $T_b$ to $T_c$ using 
		edge $e_{{\rm new}}$ to create
		a new tree $T_{{\rm new}}$, where $T_c$
		is the tree in $F_{{\rm tmp}}$ which includes
		vertex $v'_{{\rm new}}$ 
	\State Remove $T_b$ and $T_c$ from
		$F_{{\rm tmp}}$, and add the new tree 
		$T_{{\rm new}}$ to $F_{{\rm tmp}}$
\EndWhile
\end{algorithmic}
\end{algorithm}

When we connect a bad tree $T_b$ to $T_c$ 
using edge $e_{{\rm new}}$ in 
Algorithm~\ref{algo:ElimBad} (line 5), the 
possible types of the new tree 
$T_{{\rm new}}$ depend
on the type of tree $T_c$. 

\begin{enumerate}
\item[P1] $T_c$ is a heavy tree -- When we 
connect $T_b$ to a heavy tree, the
total weight of the new tree obviously 
exceeds $\lambda^*$, and the new tree 
$T_{{\rm new}}$ is a heavy tree.

\item[P2] $T_c$ is a light tree -- After
connecting the two trees, $T_{{\rm new}}$ can
be either light or heavy, depending on its
total weight.

\item[P3] $T_c$ is a bad tree -- When two
bad trees are connected, the resulting new
tree $T_{{\rm new}}$ could be any of the 
three types (bad, light or heavy).
\end{enumerate}

Note that a new tree $T_{{\rm new}}$ can be 
a bad tree only if $T_c$ is also a bad tree
(case P3). 

The order in which we choose bad trees in 
Algorithm~\ref{algo:ElimBad} is not important.
In addition, 
Algorithm~\ref{algo:ElimBad} terminates after 
at most $m-1$ rounds; when all $m$ trees
$T_1, \ldots, T_m$ are combined into a single 
spanning tree after at most $m-1$ rounds, 
the resulting tree contains all vertices 
in $V$ and, hence, cannot be a bad tree, 
thereby terminating the algorithm.

Denote by $F_{{\rm opt}}$ the final forest 
$F_{{\rm tmp}}$ produced 
by Algorithm~\ref{algo:ElimBad}.
Let $n_{LT} \in \{0, 1, \ldots, m\}$ be the number 
of light trees in $F_{{\rm opt}}$.

\begin{lemma}
\label{lemma:heacyTreeWeight}
Suppose that $F_h$ is the collection of heavy trees in 
$F_{{\rm opt}}$. Then,
\beqan
w(F_h)
\leq (k - n_{LT})\lambda^*.
\eeqan
\end{lemma}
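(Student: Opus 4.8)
\emph{Plan of proof.} The plan is to fix an optimal cycle cover $\mathscr{C}^{*}=\{C_{1}^{*},\dots,C_{q^{*}}^{*}\}$ attaining the value $\lambda^{*}$ and to ``charge'' the weight of $F_{h}$ to the optimal cycles, in such a way that every light tree of $F_{{\rm opt}}$ removes one cycle from the charge. First I would record the elementary counting facts. Since each cycle contains exactly one depot (constraint c3) and all $m$ depots must be covered (constraint c4), we have $m\le q^{*}\le k$. Since the trees of $F_{{\rm opt}}$ are pairwise vertex-disjoint and every optimal cycle has a non-empty vertex set, no optimal cycle can have its vertex set contained in two different light trees; hence the optimal cycles whose vertex set lies inside \emph{some} light tree number at least $n_{LT}$. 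Writing $\mathscr{C}^{*}_{h}$ for the remaining optimal cycles, this gives $|\mathscr{C}^{*}_{h}|\le q^{*}-n_{LT}\le k-n_{LT}$, and one checks that $\mathscr{C}^{*}_{h}$ still covers $V_{h}:=\bigcup_{H\in F_{h}}\mathcal V(H)$ (a vertex of $V_{h}$ covered only by a ``light'' optimal cycle would lie inside a light tree, contradicting $V_{h}\cap\mathcal V(L)=\emptyset$ for light $L$).

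Next I would isolate the two structural estimates that make the charge work. \textbf{(i)} Because $T'$ is a \emph{minimum} spanning tree of $G'$ and the depot-merging map is weight non-increasing (it replaces each $(d,v)$ by the lighter $(\hat d,v)$), deleting one heaviest edge from each optimal cycle and projecting into $G'$ produces a connected spanning subgraph of $G'$ of weight at most $w(\mathscr{C}^{*})$, so $w(F^{*})=w(T')\le w(\mathscr{C}^{*})\le q^{*}\lambda^{*}$. More to the point, restricting this to the vertices of $V_{h}$—and using the metric (triangle inequality) to short-cut through any vertex that a cycle in $\mathscr{C}^{*}_{h}$ visits outside $V_{h}$, which is harmless since on a path every such vertex has degree $\le 2$—shows that the sub-forest of $F^{*}$ spanned by $V_{h}$ is itself (after re-merging depots) a minimum spanning forest of the induced subgraph of $G'$ on those vertices, of weight at most $\sum_{C^{*}\in\mathscr{C}^{*}_{h}}w(C^{*})\le(q^{*}-n_{LT})\lambda^{*}$. \textbf{(ii)} Every edge added by Algorithm~\ref{algo:ElimBad} has weight at most $\lambda^{*}$: when a bad tree $T_{b}$ is processed the chosen edge is the lightest edge of $E^{\dagger}(T_{b})$; since $E^{\dagger}$ is an MST of the graph obtained from $G$ by contracting each tree of $F^{*}$, the MST cut property gives that this weight equals the minimum weight of an edge of $G$ crossing the cut $(\mathcal V(T_{b}),V\setminus\mathcal V(T_{b}))$; and since $T_{b}$ is bad, some optimal cycle meets $\mathcal V(T_{b})$ without being contained in it, hence crosses that cut along an edge of weight at most its own total weight $\le\lambda^{*}$.

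Finally I would assemble the bound. Each heavy tree of $F_{{\rm opt}}$ is a union of trees of $F^{*}$ joined by edges added in Algorithm~\ref{algo:ElimBad}; summed over heavy trees, $F_{h}$ is the $V_{h}$-sub-forest of $F^{*}$ together with exactly (number of $F^{*}$-trees inside $F_{h}$) minus (number of heavy trees) extra edges, each of weight $\le\lambda^{*}$ by (ii). The target bound then follows by combining the minimum-spanning-forest estimate of (i) on the $V_{h}$-part with these extra edges, \emph{absorbing} the extra edges into the optimal-cycle budget rather than adding them on top of (i)—each such edge is, via (ii), dominated by a crossing edge of a cycle in $\mathscr{C}^{*}_{h}$ that contributes slack not already spent in the path forest of (i)—so that altogether $w(F_{h})\le\sum_{C^{*}\in\mathscr{C}^{*}_{h}}w(C^{*})\le(q^{*}-n_{LT})\lambda^{*}\le(k-n_{LT})\lambda^{*}$.

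The main obstacle is exactly this last bookkeeping step. The naive chain $w(F_{h})\le w(F_{{\rm opt}})\le w(F^{*})+(\text{added edges})\le q^{*}\lambda^{*}+(m-1)\lambda^{*}$ is far too lossy and, crucially, never yields the $-n_{LT}\lambda^{*}$ saving. Producing that saving forces one to (a) work with $V_{h}$ instead of all of $V$, so that only the $q^{*}-n_{LT}$ cycles of $\mathscr{C}^{*}_{h}$ are in play; (b) charge the Algorithm~\ref{algo:ElimBad} edges against spare weight inside those cycles rather than on top of the MST estimate; and (c) verify that the metric short-cutting invoked when an optimal cycle strays outside $V_{h}$ never increases weight—which holds only because the objects short-cut are paths (degree $\le 2$ at every interior vertex) and not trees with Steiner branch points. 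Pinning down (b) precisely—i.e.\ exhibiting the matching between Algorithm~\ref{algo:ElimBad}'s edges and disjoint pieces of the cycles in $\mathscr{C}^{*}_{h}$—is the technical heart of the argument.
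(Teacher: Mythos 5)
Your plan follows essentially the same route as the paper's proof: restrict attention to the vertex set $V_h$ of the heavy trees, let $\mathscr{C}^*_h$ be the optimal cycles needed to cover it, observe that at least $n_{LT}$ optimal cycles are swallowed by light trees so that $|\mathscr{C}^*_h|\le k-n_{LT}$, convert each cycle of $\mathscr{C}^*_h$ into a rooted path by deleting its heaviest edge to bound the minimum rooted spanning forest on $V_h$, and then absorb the edges added by Algorithm~\ref{algo:ElimBad} into the slack created by those deleted heaviest edges. Your items (i) and (ii) are sound (and your remark that the short-cutting is safe precisely because the objects being short-cut are paths, not Steiner trees, is a point the paper glosses over).

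The genuine gap is exactly the step you flag at the end and do not carry out: the injective assignment of the added edges $E^+$ to \emph{distinct} cycles of $\mathscr{C}^*_h$. Your estimate (ii) only gives $w(e^+)\le\lambda^*$ per added edge, and even your sharper remark that $e^+$ is ``dominated by a crossing edge of a cycle in $\mathscr{C}^*_h$'' does not suffice, because several added edges could be charged against the \emph{same} cycle $C$, whose available slack is only the single quantity $w_{\max}(C)$ (which may be far smaller than $\lambda^*$). The difficulty is concentrated in the case where a bad tree is merged into another bad tree, possibly repeatedly, so that one eventual heavy tree accumulates many added edges. The paper resolves this with a depot-counting argument: a bad tree formed by $r$ prior merges contains at least $r+1$ depots, of which only $r$ have been ``used'' in earlier charges, so at each merge one can pick a fresh depot $d_\ell$, hence a fresh cycle $C^*_\ell\in\mathscr{C}^*_h$ (each cycle contains exactly one depot by constraint c3), and the bad-tree property guarantees that $C^*_\ell$ crosses the cut around the current tree, yielding $w(e^+_\ell)\le w_{\max}(C^*_\ell)$ with all the $C^*_\ell$ distinct. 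Only then does $\sum_{e\in E^+}w(e)\le\sum_{C\in\mathscr{C}^*_h}w_{\max}(C)$ hold and cancel exactly against the $-\sum_{C\in\mathscr{C}^*_h}w_{\max}(C)$ term from the spanning-forest bound. Without this matching your final inequality is not established.
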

\prf
A proof of the lemma can be found in 
Appendix~\ref{appen:heacyTreeWeight}.
\QED

{\bf Step 2-ii:} 
Unfortunately, in practice, we do not have 
access to $\lambda^*$ or the optimal cycle 
cover $\mathscr{C}^*$. Hence, 
we cannot determine which trees in the 
forest $F^*$ are bad trees and execute
Algorithm~\ref{algo:ElimBad}. 

For this reason, we consider all possible 
ways in which the $m$ trees in $F^*$ 
can be connected 
to form a new forest, using the edges
in $E^\dagger$. Since $\big| E^{\dagger}
\big| = m-1$, the number of possible
forests we need to consider, including 
the case with a single spanning tree, 
is equal to $2^{m-1}$, and one
of these possible forests coincides 
with $F_{{\rm opt}}$. 
We denote by $\mathscr{F}$
the set of $2^{m-1}$ forests we 
consider, and use $F_{{\rm cand}}$ to
refer to a forest in $\mathscr{F}$.
 
We provide these $2^{m-1}$ forests
in $\mathscr{F}$ as an input to Steps 
3 and 4, one forest at a time.
However, we are primarily interested
in the forest $F_{{\rm opt}}$ 
for finding an 
approximation ratio for the proposed
algorithm.

\subsection{Step 3: Decomposition 
	of Heavy Trees}
\label{stepC}

In addition to the input to the proposed algorithm, 
namely the information regarding the
rooted min-max cycle cover problem, 
Steps 3 and 4 described in this and 
following subsections require another 
parameter $\lambda$. 
The output of these two steps depends 
on whether or not the parameter $\lambda$
is greater than or equal to $\lambda^*$: 
if $\lambda \geq \lambda^*$, 
they return a cycle cover with 
at most $k$ cycles and the maximum cycle
weight less than or equal to $5 \lambda$.

Suppose that the forest under consideration
(out of $2^{m-1}$ possible forests in 
$\mathscr{F}$) is
$F_{{\rm cand}}$, and $\lambda$ is a constant
satisfying $\max_{e \in 
{\cal E}(F_{{\rm cand}})}w(e)
\leq \lambda$. 
We decompose the trees in $F_{{\rm cand}}$ 
with the help of the following lemma.
\\ \vspace{-0.1in}

\begin{lemma}[Lemma 2 of 
\cite{khani2014improved}] \label{lemma:2}
Fix a tree $T$ and $\lambda \geq \max_{e 
\in {\cal E}(T)} w(e)$.
Then, the tree $T$ can be decomposed into 
subtrees $T^s_1, ...,T^s_{m_s}$, such that 
(i) $w(T^s_i) < 2\lambda$ for all $i = 1, 
\ldots, m_s$, 
(ii) ${\cal V}(\cup_{i=1}^{m_s} T^s_i)
= {\cal V}(T)$, and 
(iii) the number of subtrees, $m_s$, 
satisfies
\beqan
m_s \myleq \max \left( \left \lfloor 
	\frac{w(T)}{\lambda}
	\right \rfloor,1 \right).
\eeqan
\end{lemma}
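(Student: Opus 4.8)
The plan is to prove the lemma constructively, by a greedy ``carving'' procedure that repeatedly peels off a subtree of tightly controlled weight, followed by a counting argument that exploits the hypothesis $\lambda \ge \max_{e \in \cE(T)} w(e)$. First I would root $T$ at an arbitrary vertex, writing $T_v$ for the subtree rooted at $v$ in the tree currently under consideration. The procedure is: while the current tree $T'$ satisfies $w(T') \ge 2\lambda$, select a deepest vertex $v$ with $w(T'_v) \ge \lambda$ (the root itself qualifies, so such a $v$ exists), carve off a connected subtree $P$ with $w(P) \in [\lambda, 2\lambda)$ and $v \in \cV(P)$, delete from $T'$ the vertices $\cV(P) \setminus \{v\}$, and repeat; once $w(T') < 2\lambda$, output $T'$ as the last subtree. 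The carved subtrees, together with this last one, are the $T^s_1, \ldots, T^s_{m_s}$.

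The heart of the argument is that a suitable $P$ always exists. By the choice of $v$, every child $c$ of $v$ has $w(T'_c) < \lambda$, so each ``branch'' $B_c$ — the subtree formed by $v$, the edge $(v,c)$, and $T'_c$ — has weight $w(T'_c) + w(v,c) < \lambda + \lambda = 2\lambda$, where $w(v,c) \le \lambda$ is exactly the hypothesis on the maximum edge weight. If $w(T'_v) < 2\lambda$, I take $P = T'_v$, which has weight in $[\lambda, 2\lambda)$. Otherwise $\sum_c w(B_c) = w(T'_v) \ge 2\lambda$; I add branches one at a time in arbitrary order until the total weight of the added branches first reaches $\lambda$, and let $B_{c^{\star}}$ be the branch that caused this. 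If $w(B_{c^{\star}}) \ge \lambda$, set $P = B_{c^{\star}}$; otherwise the added total was below $\lambda$ before $B_{c^{\star}}$ and so is below $2\lambda$ afterwards, and I set $P$ to be $v$ together with all added branches. In each case $w(P) \in [\lambda, 2\lambda)$, $P$ is connected and contains $v$, and the remaining graph is still a tree (only pendant pieces hanging off $v$ were removed). This yields property (i) for the carved pieces and, by the stopping rule, for the final output subtree; property (ii) holds because the vertices deleted at each step are precisely the non-$v$ vertices of $P$, which $P$ covers.

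For the count in (iii), observe that a carve is performed only while $w(T') \ge 2\lambda$ and removes weight $w(P) \ge \lambda$; moreover after a carve the remaining tree has weight $w(T') - w(P) \ge 2\lambda - w(P) > 0$, so it is nonempty. Suppose $t \ge 1$ carves occur, producing $P_1, \ldots, P_t$ and a nonempty final subtree $P_{t+1}$. The tree just before the $t$-th carve has weight $\ge 2\lambda$ and equals $w(P_t) + w(P_{t+1})$, while $w(P_1) + \cdots + w(P_{t-1}) \ge (t-1)\lambda$; hence $w(T) \ge (t-1)\lambda + 2\lambda = (t+1)\lambda$. Since $m_s = t+1$ here and $m_s$ is an integer, $m_s \le w(T)/\lambda$ forces $m_s \le \lfloor w(T)/\lambda \rfloor$. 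If instead $t = 0$, then $w(T) < 2\lambda$ and $m_s = 1$. Combining the two cases gives $m_s \le \max(\lfloor w(T)/\lambda \rfloor, 1)$. Termination is immediate, since each carve strictly decreases the total weight of the current tree and $T$ has finitely many edges.

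The step I expect to be the main obstacle is getting the \emph{sharp} count, together with the existence of the carve. The naive greedy that merely keeps a running bundle and cuts it whenever it reaches weight $\lambda$ can leave a small leftover bundle and only yields the weaker bound $\lfloor w(T)/\lambda \rfloor + 1$; the improvement to $\lfloor w(T)/\lambda \rfloor$ rests on never carving unless the \emph{entire} remaining tree still weighs at least $2\lambda$, which lets the surviving remainder absorb an extra $\lambda$ in the charging and also guarantees the last piece is genuinely nonempty, making $m_s = t+1$ exact. Within the carve itself, the case split — use a single branch alone when its weight already lies in $[\lambda, 2\lambda)$, versus bundle several light branches otherwise — is where the bound $\lambda \ge \max_{e} w(e)$ does all the work and must be invoked carefully.
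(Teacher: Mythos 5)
Your argument is correct. The paper gives no proof of this statement---it is imported verbatim as Lemma~2 of \cite{khani2014improved}, whose proof (going back to the splitting procedure of \cite{even2004min}) is exactly the deepest-heavy-vertex carving you describe, so your reconstruction matches the intended argument, including the key points of using $\lambda \geq \max_{e \in {\cal E}(T)} w(e)$ to bound each branch below $2\lambda$ and of carving only while the remainder weighs at least $2\lambda$ so that the last piece can be charged an extra $\lambda$, which sharpens the count to $\lfloor w(T)/\lambda \rfloor$.
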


The goal of Step 3 is to construct a new 
forest $\hat{F}_{{\rm cand}}$ that will be 
used in Step 4 to find a rooted cycle cover. 
In particular, when $F_{{\rm cand}} 
= F_{{\rm opt}}$, 
the cycle cover found in Step 4 will possess
a provable upper bound on the maximum
cycle weight. 

To this end, we first
put all trees in $F_{{\rm cand}} $ whose
weight is less than $2 \lambda$ 
in $\hat{F}_{{\rm cand}}$. 
Second, for trees whose weight 
is greater than or equal to $2\lambda$, we 
first decompose them into subtrees as 
described in Lemma~\ref{lemma:2},
using the algorithm proposed in 
\cite{even2004min},
and then put the subtrees in 
$\hat{F}_{{\rm cand}}$. 

The resulting forest $\hat{F}_{{\rm cand}}$ when 
$F_{{\rm cand}} = F_{{\rm opt}}$, which we denote
simply by $\hat{F}_{{\rm opt}}$, has the 
following two properties. Given a 
subgraph $G'$, we define
$w_{\max}(G') := \max_{e \in 
{\cal E}(G')} w(e)$ to be the maximum edge weight in 
the subgraph.
\myskip

\begin{lemma}	\label{lemma:4}
Suppose $\lambda \geq 
w_{\max}(F_{{\rm opt}})$. Then, the maximum weight 
of trees in $\hat{F}_{{\rm opt}}$ is at most $2 
\lambda$.
\end{lemma}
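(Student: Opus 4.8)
The plan is to unwind the definition of $\hat{F}_{\rm opt}$ and observe that, by construction, every tree in it originates from one of two sources, each of which is guaranteed to have weight below $2\lambda$. This is essentially a bookkeeping argument, so I would keep it short.

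First I would split the trees of $F_{\rm opt}$ into the ``small'' ones, those with $w(T) < 2\lambda$, and the ``large'' ones, those with $w(T) \geq 2\lambda$. By the description of Step~3, each small tree is placed into $\hat{F}_{\rm opt}$ unchanged, so it trivially satisfies $w(T) < 2\lambda$, hence $w(T) \leq 2\lambda$.

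Second, for each large tree $T$ I would invoke Lemma~\ref{lemma:2}. Before doing so I must verify its hypothesis, namely $\lambda \geq \max_{e \in {\cal E}(T)} w(e)$; this is immediate from the standing assumption, since $T$ is a subgraph of $F_{\rm opt}$ and therefore $\max_{e \in {\cal E}(T)} w(e) \leq w_{\max}(F_{\rm opt}) \leq \lambda$. Lemma~\ref{lemma:2} then produces a decomposition of $T$ into subtrees $T^s_1, \ldots, T^s_{m_s}$ with $w(T^s_i) < 2\lambda$ for every $i$ and with ${\cal V}(\cup_i T^s_i) = {\cal V}(T)$, and by construction these are precisely the trees that replace $T$ in $\hat{F}_{\rm opt}$.

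Combining the two cases, every tree in $\hat{F}_{\rm opt}$ has weight strictly less than $2\lambda$, which in particular is at most $2\lambda$, giving the claim. The only step that needs any attention is checking the per-tree hypothesis of Lemma~\ref{lemma:2}, which is exactly where the (slightly stronger) assumption $\lambda \geq w_{\max}(F_{\rm opt})$, rather than some weaker tree-specific bound, is used; beyond that there is no real obstacle.
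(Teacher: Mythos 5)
Your proof is correct and is exactly the argument the paper intends: the paper's own proof is the one-line remark that the lemma ``follows directly from Lemma~\ref{lemma:2} and the construction of $\hat{F}_{{\rm opt}}$,'' and your write-up simply fills in the two cases (trees of weight below $2\lambda$ kept as-is, trees of weight at least $2\lambda$ decomposed) together with the verification of the hypothesis of Lemma~\ref{lemma:2}. Nothing further is needed.
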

\prf The lemma follows directly from 
Lemma~\ref{lemma:2}
and the construction of $\hat{F}_{{\rm opt}}$.
\QED

\begin{lemma}	\label{lemma:4}
If $\lambda \geq \lambda^*$, the number of 
trees in $\hat{F}_{{\rm opt}}$ is at most $k$.
\end{lemma}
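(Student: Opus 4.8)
The plan is to bound the number of trees in $\hat{F}_{{\rm opt}}$ by charging to the weight of each tree of $F_{{\rm opt}}$ the number of trees it spawns in $\hat{F}_{{\rm opt}}$. Recall that $F_{{\rm opt}}$ is the forest produced by Algorithm~\ref{algo:ElimBad}, which removes every bad tree; hence every tree of $F_{{\rm opt}}$ is either light or heavy. Write $F_h$ for the collection of heavy trees, and recall that $n_{LT}$ is the number of light trees. (We assume $\lambda^* > 0$; the case $\lambda^* = 0$ is degenerate and checked directly.)

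First I would dispose of the light trees. Each light tree $T$ satisfies $w(T) < \lambda^* \le \lambda < 2\lambda$, so by construction it is copied unchanged into $\hat{F}_{{\rm opt}}$; the light trees therefore contribute exactly $n_{LT}$ trees. Next I would bound the contribution of an individual heavy tree $T$. If $w(T) < 2\lambda$, it too is copied unchanged, contributing one tree, and since $w(T) \ge \lambda^*$ we have $1 \le w(T)/\lambda^*$. If $w(T) \ge 2\lambda$, then $T$ is decomposed via Lemma~\ref{lemma:2} — whose hypothesis is met because, in the setting of Step~3, $\lambda \ge w_{\max}(F_{{\rm opt}}) \ge w_{\max}(T)$ — into at most $\max\big(\lfloor w(T)/\lambda \rfloor, 1\big) = \lfloor w(T)/\lambda \rfloor$ subtrees (the floor being at least $2$), which is at most $w(T)/\lambda \le w(T)/\lambda^*$ using $\lambda \ge \lambda^*$. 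Thus in every case a heavy tree $T$ spawns at most $w(T)/\lambda^*$ trees in $\hat{F}_{{\rm opt}}$.

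Summing this estimate over $F_h$ and invoking Lemma~\ref{lemma:heacyTreeWeight}, the heavy trees contribute at most $w(F_h)/\lambda^* \le (k - n_{LT})\lambda^*/\lambda^* = k - n_{LT}$ trees; adding the $n_{LT}$ light trees yields a total of at most $k$, which is the claim.

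I expect the only real content — and the step to get right — is the uniform per-heavy-tree bound ``$T$ spawns at most $w(T)/\lambda^*$ trees'': it is where both hypotheses are used (the Step~3 condition $\lambda \ge w_{\max}(F_{{\rm opt}})$ to apply Lemma~\ref{lemma:2}, and $\lambda \ge \lambda^*$ to convert the $\lambda$-based subtree count into a $\lambda^*$-based one), and it is combined with Lemma~\ref{lemma:heacyTreeWeight}, which is the point at which the optimal cover $\mathscr{C}^*$ actually enters. Everything else is bookkeeping.
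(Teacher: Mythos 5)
Your proof is correct and follows essentially the same route as the paper's: light trees pass through unchanged, heavy trees are counted via Lemma~\ref{lemma:2}, and the total is controlled by the bound $w(F_h)\le (k-n_{LT})\lambda^*$ from Lemma~\ref{lemma:heacyTreeWeight}. Your uniform per-heavy-tree charge of $w(T)/\lambda^*$ is a slightly cleaner bookkeeping than the paper's split of $F_h$ into trees above and below weight $2\lambda$, but the substance is identical.
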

\prf
Please see Appendix~\ref{appen:4} for a proof
of the lemma.
\QED

\subsection{Step 4: Generation of a Rooted 
	Cycle Cover}
\label{stepD}

Recall that, even though every tree in forest
$F_{{\rm cand}}$ contains at least one depot, 
due to the decomposition of trees $T$ with 
$w(T) \geq 2 \lambda$ in Step 3, 
some of the trees in $\hat{F}_{{\rm cand}}$ 
may not include any depot. We partition
the trees in the forest $\hat{F}_{{\rm cand}}$ 
into two forests, $\hat{F}_r$ and 
$\hat{F}_{nr}$. Forest
$\hat{F}_r$ consists of trees that contain 
a depot, and $\hat{F}_{nr}$ comprise trees 
that do not cover any depot. 

We connect each tree $T$ in $\hat{F}_{nr}$ 
to a nearest depot $\tilde{d}(T)$ in 
$\arg\min_{d \in D} \big( \min_{v \in {\cal V}(T)}
w(d, v) \big)$. Note that $\min_{d \in D}
\big( \min_{v \in {\cal V}(T)} w(d, v) \big)
\leq \lambda^*/2$; 
for every vertex in $V^-$, we
have $2 \min_{d \in D} w(v, d) \leq 
\lambda^*$ from the assumption that
the edge weight function $w$ is a metric
and the optimal cycle cover must cover
all vertices in $V^-$.  For each tree $T$ 
in $\hat{F}_{nr}$, we denote the resulting 
tree we obtain after connecting it to a
nearest depot by $T^+$, and let
$\hat{F}^+_{nr} := \{  T^+ | T 
\in \hat{F}_{nr} \}$.  Note that some trees
in $\hat{F}^+_{nr}$ may share a depot with 
other trees in $\hat{F}_r$ or 
$\hat{F}^+_{nr}$. 

\begin{figure}[h]
    \centering
    \begin{subfigure}[t]{0.23\textwidth}
        \centering
        \usetikzlibrary{arrows}
        \begin{tikzpicture}[scale=0.65, every node/.style={transform shape}]
            \fill[color=black] (-0.5,1) node (v1) {} circle (.2);
            \fill[color=black] (-3.5,-1) node (v3) {} circle (.2);
            \fill[color=black] (-1,-1) node (v2) {} circle (.2);
            \fill[color=red] (0.5,-2.5) node (v4) {} circle (.2);
            \draw (v1) -- (v2);
            \draw (v3) -- (v2);
            \draw (v2) -- (v4);
            \draw [-latex](0,-2.5) .. controls (0.5,-3.5) and (1.5,-2.5) .. (0.5,-2);
            \draw [-latex](0,-1.5) .. controls (-0.5,-1) and (-0.5,-1) .. (-0.5,0);
            \draw [-latex](-0.5,0.5) .. controls (0.5,1.5) and (-1,2) .. (-1,0.5);
            \draw [-latex](-1,0) .. controls (-1,-0.5) and (-2,-0.5) .. (-2.5,-0.5);
            \draw [-latex](-3,-0.5) .. controls (-4.5,-0.5) and (-4.5,-1.5) .. (-3,-1.5);
            \draw [-latex](-2.5,-1.5) .. controls (-1.5,-1.5) and (-1,-1.5) .. (-0.5,-2);
        \end{tikzpicture}
        \caption{an Eulerian cycle of a tree}
    \end{subfigure}
    \begin{subfigure}[t]{0.23\textwidth}
        \centering
        \usetikzlibrary{arrows}
        \begin{tikzpicture}[scale=0.65, every node/.style={transform shape}]
            \fill[color=black] (-0.5,1) node (v1) {} circle (.2);
            \fill[color=black] (-3.5,-1) node (v3) {} circle (.2);
            \fill[color=black] (-1,-1) node (v2) {} circle (.2);
            \fill[color=red] (0.5,-2.5) node (v4) {} circle (.2);
            \draw [-latex](0.5,-2.2) -- (-0.7,-1);
            \draw [-latex](-0.7,-1) -- (-0.2,1);
            \draw [-latex](-0.8,1) -- (-3.5,-0.7);
            \draw [-latex](-3.5,-1.3) -- (0.2,-2.5);
        \end{tikzpicture}
        \caption{a cycle after short cutting}
    \end{subfigure}
    \caption{An example of generating a cycle for a tree}
    \label{fig:eulerianTour}
\end{figure}
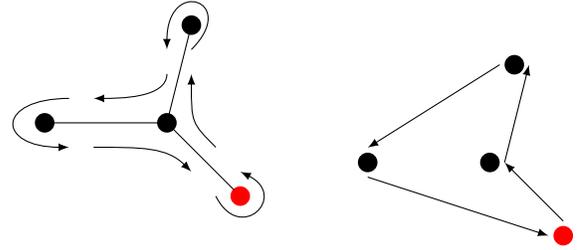

Let us consider the case when 
$\hat{F}_{{\rm cand}} = \hat{F}_{{\rm opt}}$. 
Recall from Lemma~\ref{lemma:4} that 
the number of trees in 
$\hat{F}^+ := \hat{F}_r \cup 
\hat{F}^+_{nr}$ does not exceed $k$ if 
$\lambda \geq \lambda^*$. 
Each tree in $\hat{F}^+$ contains one 
depot, and its weight 
is at most $2\lambda + \lambda^*/2$ from
its construction.
By finding the Eulerian cycle and performing
short cutting (as shown in Fig.
\ref{fig:eulerianTour})
for each tree in $\hat{F}^+$, 
we can find a cycle cover
$\mathscr{C}_{{\rm cand}}$. When 
$\hat{F}_{{\rm cand}} = \hat{F}_{{\rm opt}}$, 
the cycle cover
$\mathscr{C}_{{\rm cand}}$ has the maximum 
cycle weight of at most $4\lambda + \lambda^*$.
\myskip

\begin{lemma} \label{lemma:5}
When $F_{{\rm cand}} = F_{{\rm opt}}$, 
for $\lambda \geq w_{\max}(F_{{\rm opt}})$, 
Steps 3 and 4 together generate a cycle 
cover with the maximum cycle weight of 
at most $4 \lambda + \lambda^*$. 
In addition, if $\lambda \geq \lambda^*$, 
there are at most $k$ cycles in the cycle 
cover.
\end{lemma}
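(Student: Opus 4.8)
The plan is to trace through Steps 3 and 4 with $F_{{\rm cand}} = F_{{\rm opt}}$ and assemble the bounds already established in Lemmas~\ref{lemma:4} and the construction. The statement has two parts: a bound on the maximum cycle weight, and a bound on the number of cycles. I would prove them separately, since the weight bound only needs $\lambda \geq w_{\max}(F_{{\rm opt}})$ while the cardinality bound needs the stronger hypothesis $\lambda \geq \lambda^*$.

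First I would handle the weight bound. Starting from $\hat{F}_{{\rm opt}}$, Lemma~\ref{lemma:4} gives that every tree in $\hat{F}_{{\rm opt}}$ has weight at most $2\lambda$. The trees are partitioned into $\hat{F}_r$ (containing a depot) and $\hat{F}_{nr}$ (not containing a depot); the latter are each connected to a nearest depot via a single edge of weight $\min_{d \in D}\big(\min_{v \in {\cal V}(T)} w(d,v)\big) \leq \lambda^*/2$, using the metric assumption and the fact that the optimal cycle cover must cover every vertex in $V^-$ (so $2\min_{d\in D} w(v,d) \leq \lambda^*$ for any such $v$). Hence every tree in $\hat{F}^+ = \hat{F}_r \cup \hat{F}^+_{nr}$ has weight at most $2\lambda + \lambda^*/2$. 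For each such tree, take an Eulerian cycle of the doubled tree — this has weight at most $2(2\lambda + \lambda^*/2) = 4\lambda + \lambda^*$ — and then short-cut it to a Hamiltonian cycle on the tree's vertices; by the triangle inequality, short-cutting does not increase the weight. This yields cycle cover $\mathscr{C}_{{\rm cand}}$ with maximum cycle weight at most $4\lambda + \lambda^*$, which is the first claim.

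For the cardinality bound, assume $\lambda \geq \lambda^*$. By Lemma~\ref{lemma:4}, the number of trees in $\hat{F}^+$ is at most $k$. Since the cycle-generation procedure produces exactly one cycle per tree in $\hat{F}^+$ (each tree contains exactly one depot after the connecting step, so constraint c3 is met), the resulting cycle cover $\mathscr{C}_{{\rm cand}}$ has at most $k$ cycles. One should also verify that $\mathscr{C}_{{\rm cand}}$ is a legitimate element of ${\cal C}^k$: edge-disjointness follows because distinct trees in $\hat{F}^+$ induce cycles on disjoint vertex sets when restricted to $V^-$ (sharing only depots, which can be handled by the standard metric argument in the first Remark), and the union covers $V$ because $\hat{F}_{{\rm opt}}$ spans ${\cal V}(F_{{\rm opt}}) \supseteq V^-$ together with all depots appended via $\hat{F}^+_{nr}$ and $\hat{F}_r$.

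The main obstacle I anticipate is the cardinality argument's reliance on Lemma~\ref{lemma:4}, which in turn rests on the heavy/light/bad tree classification and Lemma~\ref{lemma:heacyTreeWeight} (the heavy-tree weight bound $w(F_h) \leq (k - n_{LT})\lambda^*$). Specifically, one must carefully count: the light trees in $F_{{\rm opt}}$ contribute at most $n_{LT}$ cycles (each stays undecomposed since its weight is below $\lambda^* \leq 2\lambda$), while the heavy trees decompose via Lemma~\ref{lemma:2} into at most $\lfloor w(T)/\lambda \rfloor$ subtrees each, totaling at most $w(F_h)/\lambda \leq (k - n_{LT})\lambda^*/\lambda \leq k - n_{LT}$ subtrees; summing gives at most $k$ trees, hence at most $k$ cycles. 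The delicate point is ensuring the floor sum telescopes correctly and that appending depots to depot-free subtrees does not create extra cycles — but since each depot-free subtree becomes exactly one rooted tree $T^+$, no new cycles are introduced, and the count is preserved.
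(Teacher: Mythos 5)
Your proposal is correct and follows essentially the same route as the paper, which proves this lemma implicitly in the text of Step 4: the $2\lambda$ per-tree bound from the decomposition, the $\lambda^*/2$ cost of attaching a depot-free tree to its nearest depot, doubling via the Eulerian tour and shortcutting with the triangle inequality to get $4\lambda + \lambda^*$, and the cardinality bound imported from the second Lemma~\ref{lemma:4}. Your added checks on feasibility (edge-disjointness and coverage) are a welcome bit of extra care that the paper glosses over, but they do not change the argument.
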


\section{The Proposed Algorithm}
	\label{sec:Complete}
	
We are now ready to present our proposed
algorithm for \eqref{eq:objective}. The 
pseudocode of the proposed algorithm is 
shown in Algorithm~\ref{alg:complete}. 
We denote the maximum cycle weight 
of a cycle cover produced in Step 4
(line 12), 
namely $\mathscr{C}_{{\rm cand}}$, 
by $O_{{\rm cand}}$. It is clear from 
the pseudocode, if the number of trees
in $\hat{F}_{{\rm cand}}$ is larger than 
$k$, we do not need to execute Step 4 as
it will not yield a feasible solution. 

\begin{algorithm}[h]
  \caption{ Rooted Min-Max Cycle Cover Algorithm}
  \label{alg:complete}
\begin{algorithmic}[1]
    \Require
 	{(i) a complete undirected graph $G=(V,E)$;
     (ii) a weight function $w:E \rightarrow \R_+$; 
     (iii) depot set $D \subseteq V$; 
     (iv) maximum number of cycles $k$; 
     (v) constant $\epsilon \in (0, 1)$}
    \Ensure
      {A cycle cover with the maximum weight 
      of cycles less than 
	  $(5+\epsilon)\lambda^*$} 
	  
    \State {Construct a rooted spanning forest $F^*$
    	\Comment{(Step 1)}} 
    \State {Find the edge set $E^\dagger$ 
    	\Comment{(Step 2-i)}} 
    
    \State {Generate $\mathscr{F}$ with $2^{m-1}$
    	possible forests}
    
    \State {Set $\mathscr{C}_{{\rm tmp}} = 
    \emptyset$ and $O_{{\rm tmp}} = \infty$}
    
    \For{each forest $F_{{\rm cand}}$ in 
    	$\mathscr{F}$} 
    	\Comment{(Step 2-ii)}
    		\State $\ell \gets 1$
    		\State {$a_\ell \gets 
    			w_{\max}(F_{{\rm cand}})$ 
    			and $b_\ell \gets (n+k) 
    				w_{\max}(G)$}
    		\While{true}
    			\State {$\lambda_\ell \gets 0.5 
    				(a_\ell + b_\ell)$}
    			
    			\State Construct forest 
        			$\hat{F}_{{\rm cand}}$
        			with $\lambda = \lambda_\ell$
        			\Comment{(Step 3)}
        			
        		\If{number of trees in 
        			$\hat{F}_{{\rm cand}}
        			\leq k$}
        			
        			\State Find a cycle cover 
        				$\mathscr{C}_{{\rm cand}}$
        				\Comment{(Step 4)}        		
    				\If{$O_{{\rm cand}}
    					< O_{{\rm tmp}}$}
        				\State $\mathscr{C}_{{\rm tmp}}
        				\gets \mathscr{C}_{{\rm cand}}$
        			
        				\State $O_{{\rm tmp}} \gets
        					O_{{\rm cand}}$
        			\EndIf
        			\State $a_{\ell+1} 
        				\gets a_\ell$ and 
        				$b_{\ell+1} \gets
        				\lambda_\ell$ 
        		\Else 
        			\State $a_{\ell + 1} 
    					\gets \lambda_\ell$
    					and $b_{\ell+1}
    					\gets b_\ell$
        		\EndIf
    			\If{$b_\ell - a_\ell < 0.5 
    				\epsilon \ a_\ell$}
    				\State Break \Comment{break out
    					of \texttt{while} loop}
    			\Else
    				\State $\ell \gets \ell + 1$
    			\EndIf
        	\EndWhile
    \EndFor
\end{algorithmic}
\end{algorithm}

Given a forest $F_{{\rm cand}}$
in $\mathscr{F}$, a binary search for a 
suitable value of $\lambda$ (lines 8 - 26) 
is performed 
over the interval $[w_{\max}(F_{{\rm cand}}), 
(n+k) w_{\max}(G)]$.
The goal of the binary search
is not to find the smallest value of
$\lambda$ greater than or equal to 
$\lambda^*$. Instead, it is to find a value
of $\lambda$ for which Step 3 produces a 
feasible forest $\hat{F}_{{\rm cand}}$ 
with at most 
$k$ trees (line 11) and the termination 
condition in line 21 is satisfied. 
 
Lemma~\ref{lemma:4} guarantees that our 
algorithm will produce at least one 
feasible solution with at most $k$ cycles 
as long as $\lambda^*$ lies in $\big[
w_{\max}(F_{{\rm opt}}), 
(n+k) w_{\max}(G) \big)$. Moreover, 
Lemma~\ref{lemma:5} ensures that 
the maximum cycle weight of the cycle 
cover found by the algorithm is at most 
$4\lambda + \lambda^*$. Therefore, by 
selecting a suitable value of $\lambda$, 
we can prove the approximation ratio of
$5 + \epsilon$ for the algorithm, 
where $\epsilon$ is a constant in the interval
(0, 1), which is an input to the algorithm we
select. 
\myskip

\begin{theorem}	\label{thm:1}
For fixed $\epsilon \in (0, 1)$, Algorithm 
\ref{alg:complete} returns a rooted cycle 
cover with at most $k$ cycles and 
the maximum cycle weight less than or equal 
to $(5 + \epsilon) \lambda^*$. 
\end{theorem}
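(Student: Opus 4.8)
The plan is to trace through Algorithm~\ref{alg:complete} focusing only on the iteration of the outer \texttt{for} loop in which $F_{{\rm cand}} = F_{{\rm opt}}$, since the algorithm keeps the best cycle cover found over all $2^{m-1}$ forests and hence its output is no worse than what this particular iteration produces. For this iteration I would first argue that the binary search interval $[w_{\max}(F_{{\rm opt}}), (n+k)w_{\max}(G)]$ contains $\lambda^*$: the lower bound holds because every edge of $F_{{\rm opt}}$ is either an edge of $F^*$ (an MST-type edge, hence of weight at most the corresponding optimal-cycle edge) or an edge of $E^\dagger$, and in all cases $w_{\max}(F_{{\rm opt}}) \le \lambda^*$; the upper bound $\lambda^* \le (n+k)w_{\max}(G)$ holds because an optimal cycle has at most $n+k$ edges (loosely), each of weight at most $w_{\max}(G)$. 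So $\lambda^* \in [a_1, b_1]$.

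Next I would analyze the binary search itself (lines 8–26). The key monotonicity fact is: once $\lambda \ge \lambda^*$, Lemma~\ref{lemma:4} guarantees $\hat F_{{\rm cand}}$ has at most $k$ trees, so the \texttt{if} branch on line 11 is taken and the invariant $a_{\ell+1} \le \lambda^* \le b_{\ell+1}$ is preserved (we set $b_{\ell+1} = \lambda_\ell \ge \lambda^*$ and keep $a_{\ell+1} = a_\ell \le \lambda^*$). Conversely, if $\lambda < \lambda^*$ and the \texttt{else} branch is taken, we set $a_{\ell+1} = \lambda_\ell < \lambda^*$, again preserving the invariant; and if the \texttt{if} branch happens to be taken even though $\lambda < \lambda^*$, that only gives us an \emph{additional} feasible cycle cover, which can only help. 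Thus the invariant $a_\ell \le \lambda^* \le b_\ell$ holds throughout, and since the interval halves each round, the loop terminates when $b_\ell - a_\ell < 0.5\,\epsilon\, a_\ell$.

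Then I would combine this with Lemma~\ref{lemma:5}. At termination we have a recorded cycle cover obtained at some $\lambda_\ell$ that fell into the \texttt{if} branch; but the crucial point is that \emph{the last $\lambda_\ell$ satisfying $\lambda_\ell \ge \lambda^*$ that was tested}, call it $\bar\lambda$, lies in $[a_\ell, b_\ell]$ up to the halving, so $\bar\lambda \le b_\ell < a_\ell(1 + 0.5\epsilon) \le \lambda^*(1 + 0.5\epsilon)$ — here I use $a_\ell \le \lambda^*$. Since $\bar\lambda \ge \lambda^* \ge w_{\max}(F_{{\rm opt}})$, Lemma~\ref{lemma:5} applies and yields a cycle cover with at most $k$ cycles and maximum cycle weight at most $4\bar\lambda + \lambda^* \le 4\lambda^*(1 + 0.5\epsilon) + \lambda^* = (5 + 2\epsilon)\lambda^*$. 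To land exactly on $(5+\epsilon)\lambda^*$ I would instead run the binary search to tolerance $0.25\epsilon$ (or equivalently absorb the factor of $2$ by noting the problem statement allows rescaling $\epsilon$), giving $4\bar\lambda + \lambda^* \le (5+\epsilon)\lambda^*$. Because $O_{{\rm tmp}}$ only ever decreases and is at most the value of this particular cover, the algorithm's output satisfies the claimed bound. The feasibility (at most $k$ cycles) is inherited from the line-11 test, which is never bypassed when a cover is recorded.

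The main obstacle is the bookkeeping around the binary search: one must be careful that the \emph{recorded} cover (which may come from any $\lambda_\ell$ in the \texttt{if} branch, including ones with $\lambda_\ell < \lambda^*$ where Lemma~\ref{lemma:5}'s weight bound need not hold) is nonetheless dominated by the cover that \emph{would} be produced at a near-optimal $\bar\lambda \gtrsim \lambda^*$ — this requires arguing that some tested $\lambda_\ell \ge \lambda^*$ with $\lambda_\ell \le (1+O(\epsilon))\lambda^*$ is in fact evaluated by Step 4, which follows from the interval invariant together with the fact that every $\lambda_\ell \ge \lambda^*$ triggers the \texttt{if} branch. Handling the edge case where the first tested $\lambda_1 = 0.5(a_1+b_1)$ already satisfies the termination condition, or where $\lambda^*$ equals $w_{\max}(F_{{\rm opt}})$ exactly, is routine and I would dispatch it by a one-line remark.
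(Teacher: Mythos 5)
Your proposal follows essentially the same route as the paper's proof: restrict attention to the iteration with $F_{{\rm cand}} = F_{{\rm opt}}$, show that the binary-search interval brackets $\lambda^*$ (the lower bound $w_{\max}(F_{{\rm opt}}) \leq \lambda^*$ is the paper's Lemma~\ref{lemma:6}, established by exactly the two-case split you sketch, with an exchange argument against the minimality of $F^*$ for the first case), maintain the invariant $a_\ell \leq \lambda^*$ throughout the search, and close with Lemma~\ref{lemma:5}'s bound $4\lambda + \lambda^*$. Two small corrections to your bookkeeping. First, the two-sided invariant $a_\ell \leq \lambda^* \leq b_\ell$ is not actually preserved --- as you yourself observe, the \texttt{if} branch can fire at some $\lambda_\ell < \lambda^*$, which sets $b_{\ell+1} = \lambda_\ell < \lambda^*$ --- but only the one-sided invariant $a_\ell \leq \lambda^*$ is needed, and your worry that Lemma~\ref{lemma:5}'s weight bound might fail for a cover recorded at such a $\lambda_\ell$ is unfounded: the $4\lambda + \lambda^*$ bound requires only $\lambda \geq w_{\max}(F_{{\rm opt}}) = a_1 \leq \lambda_\ell$, while the hypothesis $\lambda \geq \lambda^*$ is needed only for the at-most-$k$-cycles conclusion, which line 11 of the algorithm verifies directly. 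Second, your final estimate loses a factor of two in $\epsilon$ because you bound the relevant tested value by $b_\ell$; the paper instead uses the midpoint, $\lambda_N = (a_N + b_N)/2 < (1 + \epsilon/4)\,a_N \leq (1+\epsilon/4)\lambda^*$, so that $4\lambda_N + \lambda^* < (5+\epsilon)\lambda^*$ with the termination tolerance $0.5\,\epsilon\, a_\ell$ exactly as written, and no rescaling of $\epsilon$ or retuning of the tolerance is needed.
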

\prf
A proof of the theorem can be found in Appendix~\ref{appen:thm1}. 
\QED



\section{Complexity of the Proposed Algorithm}
	\label{sec:Complexity}
	
In this section, we study the 
complexity of the proposed algorithm 
(Algorithm~\ref{alg:complete}). For a fixed
number of depots, $m$, we will show that 
the overall complexity is $O(n^2)$, where
$n$ is the number of vertices in $V$, by 
examining the computational requirements of
each step. 
\myskip

\begin{itemize}
\item{\bf Step 1:} The Prim's 
algorithm for finding a minimum spanning
tree $T'$ (Step 1-i) has 
complexity of $O(n^2)$. 
Step 1-ii simply requires finding the
closest depot for each vertex in $V^-$
and has complexity of $O(n)$. 

\item{\bf Step 2:} A naive way to find
the edge set $E^\dagger$ is to construct
a minimum spanning tree for $G$
(using, for instance, Prim's algorithm
\cite{Algorithm}) and remove the edges
in $F^*$. This has complexity $O(n^2)$. 

\item{\bf Step 3:} For every 
forest $F_{{\rm cand}}$ in $\mathscr{F}$,
$w_{\max}(F_{{\rm cand}}) \geq 
w_{\max}(F^*)$. Hence, the \texttt{while} loop
in lines 8 through 26 terminates after
at most 
\beqan
&& \myhb \log_2\left( \frac{(n+k) w_{\max}(G)}
{0.5 \epsilon w_{\max}(F^*)} \right) \lb
\myeq O\big( \log_2(n+k) \big)
+ O\big( \log_2(1/\epsilon) \big)  \lb
&& + O\big(\log_2( w_{\max}(G) / w_{\max}(F^*)) \big)
\eeqan
iterations. 
In addition, decomposing trees whose
weight is greater than or equal to
$2 \lambda$ using `splitting' 
described in \cite{even2004min, 
khani2014improved} has complexity 
of $O(n)$. 

\item{\bf Step 4:} 
Both connecting trees in $\hat{F}_{nr}$
to the closest depots and finding 
the Eulerian cycles and associated
rooted cycle cover have
complexity of $O(n)$. 

\end{itemize}

From the above discussion, the overall 
complexity of the proposed algorithm is 
$O\big( n^2 \big) 
+ O\big( 2^{m-1} n \log(n+k) \big)
+ O\big(n  \log(1/\epsilon) \big)$, 
provided that  $w_{\max}(G) / w_{\max}(F^*)
= O\big( \exp(n) \big)$.
In practice, we have $k = O(n)$ and, 
as a result, the complexity
is in fact $O(n^2) + O\big( 2^{m-1} n \log(n)
\big) + O\big(n  \log(1/\epsilon) \big)$. 
Therefore, when both the number of depots 
($m$) and $\epsilon$ are fixed, the complexity 
is $O(n^2)$. Furthermore, since the $2^{m-1}$
possible forests in $\mathscr{F}$ 
can be considered independently, 
the execution of the \texttt{for} loop in 
the algorithm (lines 5 - 27) can be 
parallelized.  

Finally, recall that $F^*$ is a spanning forest
consisting of $m$ trees, which are constructed 
from the minimum spanning tree $T'$ of $G'$ in 
Step 1-i and are rooted at the $m$ depots. 
Hence, unless \underline{all} 
remaining $n-m$ vertices 
get closer to the $m$ depots with increasing $n$, 
the assumption $w_{\max}(G) / w_{\max}(F^*)
= O\big( \exp(n) \big)$ will hold. For instance, 
it is well known that the longest edge of 
the minimum spanning tree covering $n$
independent and identically distributed 
points in a unit ball in $\R^d$, 
$d \geq 2$, converges
almost surely to $c \big( \log(n) / n \big)^{1/d}$
as $n$ goes to $\infty$, where $c$ is
some constant
\cite{Penrose99_SL_LongestEdge}. Hence, 
for a random geometric graph $G$ with
$d = 3$, we have $w_{\max}(G) / w_{\max}(F^*)
= O\big( (n / \log(n) )^{1/3} \big)$.

\section{Conclusion}

We studied the problem of robot deployment planning
for inspection of civil structures, with an emphasis on 
bridges. Each mobile robot is stationed at a depot where it
recharges the battery and is tasked with the inspection 
of a subset of points or segments of the structure. 
The problem is formulated as a (variant of the) 
uncapacitated rooted min-max cycle cover problem.
We proposed a new algorithm with approximation ratio 
of $5 + \epsilon$.

Our formulation as an uncapacitated rooted min-max 
cycle cover problem assumes that depots are large 
enough to house as many robots as needed. We suspect
that any good solution will distribute the robots
evenly across a bridge to minimize the max cycle 
weight. In some cases, however, the depots may have
limited slots for the robots for housing and 
recharging. This may require reformulating the 
problem as a capacitated rooted min-max cycle
cover, similar to that studied by the authors
of \cite{xu2015approximation}. We are currently
working on this problem. 

\section*{Acknowledgment}
This work was supported in part by an NSF grant 
ECCS 1446785 and a NIST grant 70NANB16H024.

\begin{appendices}

\section{Proof of Lemma~\ref{lemma:heacyTreeWeight}}
	\label{appen:heacyTreeWeight}

Let $V_h := {\cal V}(F_h)$ be the set of vertices in $F_h$
and $F^*_h$ the smallest subset of trees in $F^*$, which 
covers $V_h$, i.e., $V_h = {\cal V}(F^*_h)$. 
Similarly, we define $\mathscr{C}^*_h$ to be the
smallest subset of cycles in $\mathscr{C}^*$ such that
the union of vertices in the cycles contain $V_h$, i.e., 
$V_h \subseteq \cup_{C \in \mathscr{C}_h^*}
{\cal V}(C) =: V_{\mathscr{C}^*_h}$.  

First, note that, by deleting an edge with the largest
weight in each cycle $C \in \mathscr{C}_h^*$, we can obtain a 
rooted forest $F_{\mathscr{C}_h^*}$ that covers 
$V_{\mathscr{C}_h^*}$ and each tree in the forest 
contains a 
depot in $D$ because each cycle $C \in \mathscr{C}_h^*$
must be rooted at a depot. Moreover, from the construction 
of $F^*$ in Step 1 (subsection~\ref{stepA}), 
$F^*_h$ is a minimum weight rooted spanning 
forest of $V_h$. 
Therefore, 
because $V_h \subseteq V_{\mathscr{C}_h^*}$, we have
\beqan
w(F^*_h) \myleq w(F_{\mathscr{C}_h^*})\\
\myeq w(\mathscr{C}_h^*)-\sum_{C \in \mathscr{C}_h^*} 
	w_{\max}(C) \\
\myleq (k - n_{LT})\lambda^*-\sum_{C \in \mathscr{C}_h^*} 
	w_{\max}(C),
\eeqan
where $w_{\max}(G') := \max_{e \in {\cal E}(G')} w(e)$
for any subgraph $G'$, 
and the second inequality follows from the fact that 
(i) there are at most $k - n_{LT}$ cycles in $\mathscr{C}_h^*$ 
because at least $n_{LT}$ cycles are covered by 
$n_{LT}$ light 
trees and (ii) for every cycle $C$ in $\mathscr{C}^*_h$, 
$w(C) \leq \lambda^*$. 

In order to bound the total weight of the trees
in $F_h$, in addition to the bound for $w(F_h^*)$, 
we need to account for the weights of 
the edges that were introduced in the the trees
of $F_h$ through the merging process of bad
trees in $F_h^*$. For this, we consider two cases 
based on how the heavy tree $T_h$ in $F_h$, which
does not belong to $F^*_h$, was created. 
Let $T_b$ be the bad tree that was connected
to another tree to form $T_h$. 

\benum
\item[C1.] The bad tree $T_b$ is a bad tree 
in $F^*_h$. 

\item[C2.] The bad tree $T_b$ is not a bad tree
in $F^*_h$. In this case, $T_b$ must have
been created as a result of merging at 
least two bad trees in previous round(s)
of connecting bad trees. 

\eenum

{\bf Case C1:} Let $d$ be the depot covered 
by $T_b$, and choose a cycle $C^*_d$ in 
$\mathscr{C}_h^*$,
which includes $d$. The existence of such a
cycle is guaranteed because the cycles in 
$\mathscr{C}_h^*$ must cover all vertices
in $V_h$, including depots in $V_h$.
 
Since $T_b$ is a bad tree, there exists 
at least one vertex $v^*$ in $C^*_d$ which 
is not covered by $T_b$ and instead belongs
to a different tree; otherwise, $T_b$
would be a light tree. As a result, 
there is at least one edge $e'$ that connects 
a vertex in $T_b$ to another tree 
and $e'$ belongs to $C^*_d$. Since Algorithm
\ref{algo:ST} uses an edge with the smallest 
weight to connect $T_b$ to the rest of
spanning tree in the process of finding 
the edge set $E^\dagger$, the edge 
$e^+ \in E^\dagger$ used to connect $T_b$ 
to another tree (hence, is a new edge 
introduced in $T_h$) satisfies
\beqa
w(e^+) 
\leq w(e') 
\leq w_{\max}(C^*_d).
	\label{eq:c1-1}
\eeqa

{\bf Case C2:} We first consider the 
construction of the bad tree $T_b$. Since $T_b$
is a bad tree, it must have been produced
as a result of connecting two or more bad 
trees in $F^*_h$. 
Assume that $T_b$ is constructed as a result
of $L$ rounds of merging bad trees, and 
for each $\ell = 1, \ldots, L$, let $T_{b_\ell}$
be the tree chosen to be connected to
another tree $T'_{b_\ell}$ for the $\ell$-th
round of merging. Furthermore, for
each $\ell = 1, \ldots, L$, denote
the edge chosen to connect the two bad trees
$T_{b_\ell}$ and $T'_{b_\ell}$ 
by $e^+_\ell \in E^\dagger$. 

For each $\ell = 1, \ldots, L$, suppose that 
$d_\ell$ is a depot in $T_{b_\ell}$ which has 
not been considered in the previous
rounds, i.e., $d_\ell \notin
\{d_1, \ldots, d_{\ell-1}\}$, and a cycle $C^*_\ell$
in $\mathscr{C}^*_h$ covers $d_\ell$. 
We can always find a depot $d_\ell$ that meets
the above condition because $T_{b_\ell}$
contains at least $r+1$ depots, where 
$r$ is the number of rounds of merging
process needed to form
$T_{b_\ell}$ before (hence, $T_{b_\ell}$ 
includes at least $r+1$ bad trees in 
$F^*_h$) and 
exactly $r$ of these depots were 
considered in the $r$ rounds. 
This also implies that 
$C^*_\ell \neq C^*_l$ for all $\l = 1, 
\ldots, \ell-1$, because each cycle
in $\mathscr{C}^*_h$ contains exactly one 
depot. 

Following the same argument used in the 
previous case (C1), for each $\ell = 1, 
\ldots, L$, we can find an edge $e'_\ell$
in the cycle $C^*_\ell$, which  
connects a vertex in 
$T_{b_\ell}$ to another tree. Consequently, 
\beqa
w(e^+_\ell) 
\leq w(e'_\ell)
\leq w_{\max}(C^*_\ell).
	\label{eq:c2-1}
\eeqa

Finally, the bad tree $T_b$ is 
connected to another tree to form 
the heavy tree $T_h$. By the same
argument used in case C1, we can find
a depot $d$ in $T_b$ such that 
$d \notin \{d_1, \ldots, d_L\}$ 
and a cycle
$C^*_d$ containing $d$ so that 
the weight of the edge $e^+$ that
connects $T_b$ to another tree
to form $T_h$ satisfies
\begin{equation}
w(e^+) 
\leq w(e') 
\leq w_{\max}(C^*_d), 
	\tag{\ref{eq:c1-1}}
\end{equation}
where $e'$ is an edge in $C^*_d$ which
connects $T_b$ to another tree
as described in case C1. 

Note from the above discussion that, for each added 
edge $e^\star$ during the construction of a heavy 
tree 
$T_h$ that is not in $F^*_h$, we can find a distinct 
cycle in $\mathscr{C}^*_h$ such that the weight of 
the new added edge is upper bounded by the 
maximum edge weight of the cycle.  
Let $E^+ (\subseteq E^\dagger)$ denote the 
set of edges that were added
to connect trees in $F^*_h$ to create $F_h$.
Then, we have $\sum_{e \in E^+} w(e) 
\leq \sum_{C \in \mathscr{C}^*_h} w_{\max}(C)$ 
because no cycle $C \in \mathscr{C}^*_h$ is
considered more than once during the process. 
Therefore, we have 
\beqa
&& \myhb w(F_h)
= w(F_h^*)+\sum_{e \in E^+} 
	w(e) \lb
\myleq (k - n_{LT})\lambda^*
	- \sum_{C \in \mathscr{C}_h^*} w_{\max}(C)
	+ \sum_{e \in E^+}  w(e) \lb
\myleq (k - n_{LT})\lambda^*
	- \sum_{C \in \mathscr{C}_h^*} w_{\max}(C) 
	+ \sum_{C \in \mathscr{C}_h^*} 
		w_{\max}(C) \lb
\myleq (k - n_{LT})\lambda^*.
	\nonumber
\eeqa

\section{Proof of Lemma~\ref{lemma:4}}
	\label{appen:4}

First, note that all $n_{LT}$ light trees 
in $F_{{\rm opt}}$ belong to 
$\hat{F}_{{\rm opt}}$ 
since the weight of light trees is less 
than $\lambda^*$, which is less than 
or equal to $\lambda$ by assumption.

Order the heavy trees in $F_h$ by 
decreasing weight: $T_1, 
\ldots, T_{n_1}, T_{n_1 + 1}, 
\ldots, T_{n_1 + n_2}$, where
$n_1$ is the number of heavy trees
whose weight is greater than or
equal to $2 \lambda$, i.e., 
$w(T_i) \geq 2\lambda$ for $i = 1, 
\ldots, n_1$, and $\lambda^* \leq
w(T_i) < 2 \lambda$ for $i = n_1+1,
\ldots, n_1 + n_2$. 
Let $\hat{F}_h$ be the set of trees
in $\hat{F}_{{\rm opt}}$ which come
from the heavy trees in $F_h$. 

Using Lemma~\ref{lemma:2}, we can 
upper bound $|\hat{F}_h|$ as
follows.
\beqan
|\hat{F_h}| 
\myleq \sum_{i=1}^{n_1} \max \left( 
	\left\lfloor \frac{w(T_i)}{\lambda} 
	\right\rfloor, 1 \right)
	+ n_2 \\
\myleq \sum_{i=1}^{n_1}
	\frac{w(T_i)}{\lambda} + n_2 \\
\myeq \frac{w(F_h) 
	- \sum_{i = n_1 + 1}^{n_1 + n_2}w(T_i)}
	{\lambda} + n_2 \\
\myleq \frac{w(F_h)- n_2 \lambda^*}{\lambda}
	+ n_2,
\eeqan
where the last inequality follows from 
$\lambda^* \leq w(T_i)$ for all $i = n_1 + 1, 
\ldots, n_1 + n_2$. 
Using the bound for $w(F_h)$ in Lemma
\ref{lemma:heacyTreeWeight}, we obtain
\beqan
|\hat{F_h}| 
\myleq \frac{(k - n_{LT})\lambda^* 
	- n_2 \lambda^*}{\lambda} + n_2 \\
\myleq (k - n_{LT})- n_2 + n_2 \\
\myeq k - n_{LT}, 
\eeqan
where $n_{LT}$ is the number of light trees 
in $F_{{\rm opt}}$, and the second inequality is
a consequence of the assumption $\lambda^*
\leq \lambda$.
Thus, the number of trees in 
$\hat{F}_{{\rm opt}}$ can 
be upper bounded by $k$ because 
$|\hat{F}_{{\rm opt}}|
= |\hat{F_h}| + n_{LT} \leq k$.

\section{Proof of Theorem~\ref{thm:1}}
	\label{appen:thm1}

In order to prove the theorem, we only need
to show that at least one of the $2^{m-1}$ 
forests in $\mathscr{F}$ leads to a solution 
that satisfies
the approximation ratio in the theorem. 
To this end, we consider the forest 
$F_{{\rm opt}}$ generated from 
the bad tree elimination process 
(Algorithm~\ref{algo:ElimBad}) with an
optimal cycle cover $\mathscr{C}^*$. Recall
that $F_{{\rm opt}}$ always belongs to 
$\mathscr{F}$. 

For forest $F_{{\rm opt}}$, the binary search 
in Algorithm~\ref{alg:complete} is performed 
over the interval $[w_{\max}(F_{{\rm opt}}), 
(n+k) w_{\max}(G)]$.
It is clear that $(n + k) w_{\max}(G) 
> \lambda^*$ since the optimal cycle 
cover cannot have more than $(n+k-m)$ edges. 
The following lemma demonstrates that 
$w_{\max}(F_{{\rm opt}})$ is a lower
bound on the optimal value $\lambda^*$.

\begin{lemma}	\label{lemma:6}
The optimal value $\lambda^*$ is greater
than or equal to $w_{\max}(F_{{\rm opt}})$.
\end{lemma}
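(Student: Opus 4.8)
The plan is to exploit that $F_{{\rm opt}}$ is produced from $F^*$ by Algorithm~\ref{algo:ElimBad}, which never deletes an edge and only adds edges drawn from $E^\dagger$; hence ${\cal E}(F_{{\rm opt}})={\cal E}(F^*)\cup E^+$ for some $E^+\subseteq E^\dagger$, and it suffices to bound the weight of every edge of $F^*$ and of every edge in $E^+$ by $\lambda^*$. I would treat these two pieces separately.

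For the edges of $F^*$: uncoupling in Step~1-ii does not raise any edge weight, since an edge $(\hat d,v)$ is replaced by $(d^\star,v)$ with $w(d^\star,v)=w(\hat d,v)$; thus $w_{\max}(F^*)=w_{\max}(T')$, where $T'$ is the minimum spanning tree of $G'$ computed in Step~1-i. Starting from an optimal cover $\mathscr{C}^*$, delete a maximum-weight edge from each of its cycles to obtain edge-disjoint paths $P_1,\dots,P_{q^*}$ that cover $V$, each containing exactly one depot and having total weight at most $\lambda^*$ (so each individual edge on them has weight at most $\lambda^*$). Contracting all depots of $\cup_j P_j$ into $\hat d$ yields a connected spanning subgraph of $G'$ all of whose edge weights are at most $\lambda^*$ (again using $w(\hat d,v)=\min_{d\in D}w(d,v)\le w(d,v)$). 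Any spanning tree $T''$ of that subgraph is then a spanning tree of $G'$ with $w_{\max}(T'')\le\lambda^*$; and since, for an MST, removing a maximum-weight edge induces a cut across which that edge is a minimum-weight crossing edge while $T''$ must use at least one edge of that cut, we conclude $w_{\max}(F^*)=w_{\max}(T')\le w_{\max}(T'')\le\lambda^*$.

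For the edges of $E^+$: let $e_{{\rm new}}$ be the edge added in some iteration of Algorithm~\ref{algo:ElimBad} when a bad tree $T_b$ is connected to $T_c$, so $w(e_{{\rm new}})=\min_{e\in E^\dagger(T_b)}w(e)$. Every tree of $F^*$ contains a depot and merging preserves this, so $T_b$ contains a depot $d$; pick a cycle $C^*_d\in\mathscr{C}^*$ through $d$. Because $T_b$ is bad, ${\cal V}(C^*_d)\not\subseteq{\cal V}(T_b)$, so $C^*_d$ has an edge $e'$ crossing the cut $\big({\cal V}(T_b),\,V\setminus{\cal V}(T_b)\big)$ with $w(e')\le w(C^*_d)\le\lambda^*$. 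Since ${\cal V}(T_b)$ is a union of whole $F^*$-trees, no edge of $F^*$ crosses this cut, whereas $F^*\cup E^\dagger$ spans $G$; the Kruskal-type rule in Algorithm~\ref{algo:ST} (always add a globally minimum-weight edge joining two current components) then forces a minimum-weight $E$-edge across this cut to belong to $E^\dagger$, hence to $E^\dagger(T_b)$, giving $w(e_{{\rm new}})\le w(e')\le\lambda^*$. This is the same charging idea already used in the proof of Lemma~\ref{lemma:heacyTreeWeight}; here it is simply needed for every added edge rather than only those forming heavy trees, so one could alternatively quote that bound directly.

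I expect the delicate step to be the last one: justifying that the cheapest edge of $E$ across the cut induced by a bad tree actually lies in $E^\dagger$ (not merely in $E$), which rests on the cut/exchange property of the greedy construction in Algorithm~\ref{algo:ST} together with the observation that a bad tree is a union of $F^*$-trees so that no $F^*$-edge crosses the relevant cut. The $F^*$ part is routine once the minimum-bottleneck property of the MST is invoked.
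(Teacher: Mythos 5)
Your proof is correct, and its skeleton matches the paper's: both split the edges of $F_{{\rm opt}}$ into those inherited from $F^*$ and those added from $E^\dagger$ by Algorithm~\ref{algo:ElimBad}, and for the added edges both use the same charging argument (the paper simply re-invokes the bounds \eqref{eq:c1-1} and \eqref{eq:c2-1} from the proof of Lemma~\ref{lemma:heacyTreeWeight}; you re-derive them, and in fact you are more careful than the paper in justifying \emph{why} the cheapest $E^\dagger$-edge leaving a bad tree is no heavier than the cheapest $E$-edge leaving it, via the cut property of the Kruskal-type construction in Algorithm~\ref{algo:ST}). Where you genuinely diverge is the $F^*$ part: the paper argues by contradiction with a single local exchange, replacing the offending edge $e^*=(u,v)$ by the edge $(u,d_u)$ to the depot of $u$'s optimal cycle (which has weight at most $\lambda^*/2$ by the triangle inequality) to produce a strictly cheaper rooted spanning forest; you instead build a full competitor spanning tree $T''$ of the contracted graph $G'$ from the optimal cycle cover, with every edge of weight at most $\lambda^*$, and invoke the minimum-bottleneck property of the MST $T'$. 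Both hinge on the minimality of $T'$/$F^*$ and yield the same bound; the paper's swap is more local and gives the slightly stronger intermediate fact $w(e_u)\le\lambda^*/2$, while your bottleneck argument is more modular and avoids having to verify that the swapped forest is still a valid rooted spanning forest (each tree keeping exactly one depot), a point the paper checks explicitly.
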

\begin{proof}
Let $e^* := (u,v) \in \arg\max_{e \in 
{\cal E}(F_{{\rm opt}})} w(e)$.

We shall consider following two cases.
\begin{enumerate}
\item[C-i] $e^* \in {\cal E}(F^*)$ -- 
Suppose that $e^*$ belongs to a tree 
$T^*$ in $F^*$. Without loss of 
generality, we assume that, when we 
remove $e^*$ from $T^*$ and divide 
it into two 
subtrees, vertex $v$ belongs to the 
subtree with the depot in $T^*$.

Suppose that the claim is false and 
$w(e^*) > \lambda^*$. Let $C_u$ be a  
cycle in the optimal cycle cover
$\mathscr{C}^*$ which covers vertex 
$u$ and $d_u$ be the depot in the cycle
$C_u$. Denote by $e_u$ the edge $(u, d_u)$.
Note that $w(e_u) \leq 
\lambda^*/2$ 
because $2 w(e_u) \leq w(C_u) 
\leq \lambda^*$ because $w$ 
satisfies the triangle inequality. 

Consider a new spanning forest $F'$ we 
can generate by replacing the edge $e^*$
in $F^*$ with the edge $e_u$. It 
is clear that $F'$ is also a rooted 
spanning forest of $G$ since every  
tree in $F'$ includes exactly one depot 
and $F'$ covers all vertices in $V$. 
However, 
\beqan
w(F') 
\myeq w(F^*) - w(e^*) + w(e_u)
< w(F^*),
\eeqan
because $w(e_u) < \lambda^* < w(e^*)$. 
This contradicts the earlier assumption 
that $F^*$ is a minimum rooted spanning 
forest.

\item[C-ii] $e^* \notin {\cal E}(F^*)$ -- 
This means that $e^*$ is an edge added 
by Algorithm~\ref{algo:ElimBad} during 
the bad tree elimination process. From 
\eqref{eq:c1-1} and \eqref{eq:c2-1}, 
\beqan
w(e^*) 
\leq \max_{\ell = 1, \ldots, q^*}
	w_{\max}(C^*_\ell)
\leq \max_{\ell = 1, \ldots, q^*}
	w(C^*_\ell) 
\leq \lambda^*. 
\eeqan

\end{enumerate}
Since the claim of the lemma holds
in both cases, this completes the proof
of the lemma. 
\end{proof}



An important observation is that, during
the binary search in Algorithm
\ref{alg:complete}, the
lower bound of the interval, $a_\ell$, 
increases for the next round only if 
the returned forest $F_{{\rm cand}}$ 
in the $\ell$-th round
has more than $k$ trees (line 19). 
However, since Steps 3 and 4 guarantee
a rooted cycle cover with at most $k$ 
cycles when $\lambda \geq \lambda^*$ 
(Lemma~\ref{lemma:5}) 
and $a_1 \leq \lambda^*$ by 
Lemma~\ref{lemma:6}, we have $a_\ell
\leq \lambda^*$ for all $\ell = 1, 2, 
\ldots$.

Suppose that the process terminates after 
$N$ rounds, i.e., $b_N - a_N < \epsilon \ 
a_{N} / 2$. We denote the returned rooted 
cycle cover after the $N$-th
round by $\mathscr{C}^\star$. Using 
Lemma~\ref{lemma:5}, we can show
that the weight of the cycle cover 
$\mathscr{C}^\star$ is 
upper bounded by $(5+\epsilon) \lambda^*$ 
as follows.
\beqan
w(\mathscr{C}^\star) 
\myleq 4 \lambda_N + \lambda^* 
= 4 (a_N + b_N)/2 + \lambda^* \\
\myl 2(a_N + a_N + \epsilon \ a_N/2) 
	+ \lambda^* \\
\myleq (5 + \epsilon) \lambda^*
\eeqan
The last inequality is a consequence
of the earlier observation that $a_\ell
\leq \lambda^*$ for all $\ell = 1, 2, \ldots, 
N$. Obviously, the maximum cycle weight of 
$\mathscr{C}^\star$ is 
upper bounded by $w(\mathscr{C}^\star)$. 
Therefore, this proves the theorem. 


\end{appendices}



%
\bibliography{root}{}
\bibliographystyle{ieeetr}

\end{document}